\newcommand{\sfrac}[2]{\nicefrac{#1}{#2}}
\def\emph{\textbf}
\newcommand{\ie}{i.e.~}
\newcommand{\eg}{e.g.~}
\theoremstyle{plain}
\newtheorem{theorem}             {Theorem}
\newtheorem{proposition}[theorem]{Proposition}
\newtheorem*{theorem*}    {Theorem}
\newtheorem*{proposition*}{Proposition}
\newtheorem*{lemma*}      {Lemma}
\newtheorem*{corollary*}  {Corollary}
\newtheorem*{conjecture*} {Conjecture}
\theoremstyle{definition}
\newtheorem{definition}[theorem]{Definition}
\newtheorem{example}   [theorem]{Example}
\newtheorem*{definition*}{Definition}
\newtheorem*{example*}   {Example}
\theoremstyle{remark}
\newcommand{\Mcomma}{\text{,}}
\newcommand{\Msemicolon}{\text{;}}
\newcommand{\Mdot}{\text{.}}
\newcommand{\defeq}{\colonequals} 
\newcommand{\tuple}[1]{\mathopen{\langle}#1\mathclose{\rangle}} 
\newcommand{\setdef}  [2]{\left\{#1 \mid #2\right\}}             
\newcommand{\enset}   [1]{\mathopen{ \{ }#1\mathclose{ \} }} 
\newcommand{\family}   [1]{\mathopen{ ( }#1\mathclose{ ) }} 
\newcommand{\fdec}    [3]{#1 \colon #2 \longrightarrow #3}
\newcommand{\fdef}    [3]{#1 \coloncolon #2 \longmapsto     #3}
\newcommand{\fdecdef} [5]{#1 \colon #2 \longrightarrow #3 \coloncolon #4 \longmapsto #5}
\newcommand{\anotfdec}[4]{#2: #3 \overset{#1}{\longrightarrow} #4}
\newcommand{\isofdec} [3]{\anotfdec{\cong}{#1}{#2}{#3}}
\newcommand{\id}[1][]{\ensuremath{\mathrm{id}_{#1}}}
\newcommand{\cat}[1]{\ensuremath{\mathbf{#1}}}
\newcommand{\lto}{\longrightarrow}
\newcommand{\longhookrightarrow}{\lhook\joinrel\rightarrow}
\newcommand{\Forall}[1]{\forall {#1}\boldsymbol{.}\;}
\def\implies{\Rightarrow}
\newcommand{\RRpz}{\mathbb{R}_{\geq 0}}
\newcommand{\MP}{\mathsf{MP}}
\newcommand{\lk}{\mathrm{lk}}
\newcommand{\NCF}{\mathsf{NCF}}
\newcommand{\CF}{\mathsf{CF}}
\newcommand{\x}{\bar{x}}
\newcommand{\y}{\bar{y}}
\newcommand{\z}{\bar{z}}
\newcommand{\simulates}{\rightsquigarrow}
\DeclareMathOperator{\supp}{\mathsf{supp}}
\newcommand{\choice}{\mathbin{\&}}
\newcommand{\X}{\mathbf{X}} 
\newcommand{\Y}{\mathbf{Y}} 
\newcommand{\Z}{\mathbf{Z}} 
\newcommand{\XSO}{\tuple{X,\Sigma,O}}
\newcommand{\XSOp}{\tuple{X,\Sigma,O'}}
\newcommand{\XpSpOp}{\tuple{X',\Sigma',O'}}
\newcommand{\XpSpfO}{\tuple{X',\Sigma',f^*O}}
\newcommand{\YTP}{\tuple{Y,\Theta,P}}
\newcommand{\Vertices}{\mathsf{Vert}}
\newcommand{\Dist}{\mathsf{D}}
\newcommand{\Set}{\mathsf{Set}}
\newcommand{\Ev}{\mathcal{E}}
\newcommand{\op}{\mathsf{op}}
\newcommand{\one}{\mathbf{1}}
\newcommand{\cok}{\dagger}
\newcommand{\itemd}[1]{\item\textbf{#1.} }
\newlength{\twoparindent}
\newenvironment{shiftflalign}{\@fleqntrue\@mathmargin\twoparindent\align}{\endalign\@fleqnfalse}
\newcommand{\nrule}[1]{\hphantom{(choice)}\llap{(#1)}}
\newcommand{\sidecon}[1]{{\scriptsize #1}}
\newcommand{\rvar}{\begin{prooftree}
  \infer[left label=\nrule{var}]{0}{\Gamma, \enset{v : \tuple{X,\Sigma,O}} \vdash v : \tuple{X,\Sigma,O}}
\end{prooftree}}
\newcommand{\rzero}{\begin{prooftree}
  \infer[left label=\nrule{zero}]{0}{\Gamma \vdash \ezero : \tuple{\emptyset,\Delta_0, \family{ }} }
\end{prooftree}}
\newcommand{\rsingl}{\begin{prooftree}
  \infer[left label=\nrule{singl}]{0}{\Gamma \vdash \eone : \tuple{\one,\Delta_1,\family{\one}}}
\end{prooftree}}
\newcommand{\rmeas}{\begin{prooftree}
  \hypo{\Gamma \vdash t : \XSO}
  \infer[left label=\nrule{meas}]1[\sidecon{$\fdec{f}{\Sigma'}{\Sigma}$ simplicial}]{\Gamma \vdash f^* t : \XpSpfO}
\end{prooftree}}
\newcommand{\routc}{\begin{prooftree}
 \hypo{\Gamma \vdash t : \XSO}
 \infer[left label=\nrule{outc}]1[\sidecon{$\family{\fdec{h_x}{O_x}{O'_x}}_{x \in X}$}]{\Gamma \vdash t/h : \XSOp}
\end{prooftree}}
\newcommand{\rmix}{\begin{prooftree}
 \hypo{\Gamma \vdash t : \XSO}
 \hypo{\Gamma' \vdash t': \XSO}
 \infer[left label=\nrule{mix}]2[\sidecon{$\lambda \in [0,1]$}]{\Gamma, \Gamma' \vdash t +_\lambda t' : \XSO}
\end{prooftree}}
\newcommand{\rchoice}{\begin{prooftree}
 \hypo{\Gamma \vdash t : \XSO}
 \hypo{\Gamma' \vdash t':\XpSpOp}
 \infer[left label=\nrule{choice}]2{\Gamma, \Gamma' \vdash t \choice t' : \tuple{X\sqcup X', \Sigma + \Sigma', [O,O']}}
\end{prooftree}}
\newcommand{\rprod}{\begin{prooftree}
 \hypo{\Gamma \vdash t : \XSO}
 \hypo{\Gamma' \vdash t':\XpSpOp} \infer[left label=\nrule{prod}]2{\Gamma, \Gamma' \vdash t \otimes t' : \tuple{X\sqcup X', \Sigma \star \Sigma', [O,O']}}
\end{prooftree}}
\newcommand{\rcond}{\begin{prooftree}
\hypo{\Gamma \vdash t : \XSO}
\infer[left label=\nrule{cond}]1[\sidecon{$x \in X$, $y = \family{y_o \in \Vertices(\lk_x\Sigma)}_{o \in O_x}$}]{\Gamma \vdash t[x?y] : \tuple{X \cup \enset{x?y},\Sigma[x?y],O]}}
\end{prooftree}}
\newcommand{\XYTableNUM}[4]{
\begin{array}{cc|cccc}
  \text{A} & \text{B} & 0\,0 & 0\,1 & 1\,0 & 1\,1 \\
  \hline
  x_1 & y_1 & #1 \\ 
  x_1 & y_2 & #2 \\ 
  x_2 & y_1 & #3 \\ 
  x_2 & y_2 & #4 \\ 
\end{array}
}
\newcommand{\XYTablePR}{
\XYTableNUM
{\sfrac{1}{2} &  0  &  0  & \sfrac{1}{2}}
{\sfrac{1}{2} &  0  &  0  & \sfrac{1}{2}}
{\sfrac{1}{2} &  0  &  0  & \sfrac{1}{2}}
{0 & \sfrac{1}{2} & \sfrac{1}{2} & 0}}
\newcommand{\PR}{\text{PR}}
\begin{document}
\title{A comonadic view of \\ simulation and quantum resources}
\author{\IEEEauthorblockN{Samson Abramsky\IEEEauthorrefmark{1},
Rui Soares Barbosa\IEEEauthorrefmark{1},
Martti Karvonen\IEEEauthorrefmark{2}, and
Shane Mansfield\IEEEauthorrefmark{3}}
\IEEEauthorblockA{\IEEEauthorrefmark{1}Department of Computer Science, University of Oxford, U.K.}
\IEEEauthorblockA{\IEEEauthorrefmark{2}School of Informatics, University of Edinburgh, U.K.}
\IEEEauthorblockA{\IEEEauthorrefmark{3}Laboratoire d'Informatique de Paris 6, CNRS and Sorbonne Universit\'e, France}}

\IEEEoverridecommandlockouts

\maketitle

\begin{abstract}
We study simulation and quantum resources
in the setting of
the sheaf-theoretic approach to contextuality and non-locality. Resources are viewed behaviourally, as empirical models.
In earlier work, a notion of morphism for these empirical models was proposed and studied. We generalize and simplify the earlier approach, by starting with a very simple notion of  morphism, and then extending it to a more useful one by passing to a co-Kleisli category with respect to a comonad of measurement protocols. We show that these morphisms capture notions of simulation between empirical models obtained via ``free'' operations in a resource theory of contextuality, including the type of classical control used in measurement-based quantum computation schemes.
\end{abstract}

\section{Introduction}

A key objective in the field of quantum information and computation is to understand the advantage which can be gained in information-processing tasks by the use of quantum resources.
While a range of examples have been studied, to date a systematic understanding of quantum advantage is lacking.

One approach to achieving such a general understanding is through \emph{resource theories} \cite{horodecki2013quantumness,abramsky2017contextual}, in which one considers a set of operations by which one system can be transformed into another. In particular, one considers ``free operations'', which can be performed without consuming any additional resources of the kind in question. 
If resource $B$ can be constructed from $A$ using only free operations, then we say that $A $ is convertible to $B$, or $B$ is reducible to $A$.
This point of view is studied in some generality in \cite{coecke2016mathematical,fritz2017resource}.

Another natural approach, which is familiar in computation theory, is to consider a notion of \emph{simulation}; one asks if the behaviour of $B$ can be produced by some protocol using $A$ as a resource.

Both these points of view can be considered in relation to quantum advantage. Our focus in this paper is on quantum resources that take the form of \emph{non-local}, or more generally \emph{contextual}, correlations. Contextuality is one of the key signatures of non-classicality in quantum mechanics \cite{ks,bell1966}, and has been shown to be a necessary ingredient for quantum advantage in a range of information-processing tasks \cite{raussendorf2013contextuality,howard2014contextuality,bermejo2017contextuality,abramsky2017contextual}.

In previous work \cite{abramsky2017contextual},
a subset of the present authors
showed how this advantage could be quantified in terms of the \emph{contextual fraction}, and also introduced a range of free operations, which were shown to have the required property of being non-increasing with respect to the contextual fraction.  Thus this work provided some of the basic ingredients for a resource theory of quantum advantage, with contextuality as the resource.

In \cite{karvonen2018categories}, the other present author introduced a notion of simulation between (possibly contextual) behaviours, as morphisms between empirical models, in the setting of the ``sheaf-theoretic'' approach to contextuality introduced in \cite{ab}. This established a basis for a simulation-based approach to comparing resources.

In this paper, we bring these two approaches together.
\begin{itemize}
\item On the simulation side, we enhance the treatment given in \cite{karvonen2018categories} by introducing a \emph{measurement protocols} construction on empirical models (Section~\ref{ssec:mps}). Measurement protocols were first introduced in a different setting in \cite{acin2015combinatorial}. This construction captures the intuitive notion, widely used in an informal fashion in concrete results in quantum information (e.g.~\cite{barrettpironio2005}),
 of using a ``box'' or device by performing some measurement on it, and then, depending on the outcome, choosing some further measurements to perform. This form of adaptive behaviour also plays a crucial role in measurement-based quantum computing \cite{raussendorf2001one}.

We show that this construction yields a comonad on the category of empirical models. Hence, we are able to describe a very general notion of simulation of $B$ by $A$ in terms of co-Kleisli maps from $A$ to $B$ (Sections~\ref{ssec:comonad} and \ref{ssec:generalsimulations}).

\item We consider the algebraic operations previously introduced in \cite{abramsky2017contextual} and introduce a new operation allowing a conditional measurement, a one-step version of adaptivity (Section~\ref{ssec:operations}).
We present an equational theory for these operations and use this to obtain normal forms for resource expressions (Section~\ref{ssec:eqtheory}).

\item Using these normal forms, we obtain one of our main results: we show that the algebraic notion of convertibility coincides with the existence of a simulation morphism (Section~\ref{ssec:generalsimulations}).

\item We also prove some further results, including a form of no-cloning theorem at the abstract level of simulations (Section~\ref{ssec:nocloning}).
\end{itemize}

\vfill

\section{Empirical Models}\label{sec:empiricalmodels}

We begin by introducing the main ingredients of the
sheaf-theoretic approach to contextuality~\cite{ab}.
The central objects of study are empirical models. These describe the behaviours that we are considering as resources, which may be contextual.

The behaviour intended to be modelled
is that of a physical system, governed perhaps by the laws of quantum mechanics,
on which one may perform measurements and observe their outcomes.
We abstract away from the details of the physical description of the system in question
and consider only its observable behaviour,
\ie the empirical distributions of such measurement experiments. 

We can therefore think of an empirical model as a black box,
with which an agent
might interact
by way of questions (measurements) and answers (outcomes).
The interface or type of such a box is given by a \emph{measurement scenario},
which specifies the allowed measurements and the
set of possible outcomes for each of them.
In a single use of the black box, the agent may perform multiple measurements.
However, a crucial feature that is typical in quantum systems is that some combinations of measurements may not be compatible.
In particular, it is typically not the case that the agent may jointly perform \textit{all} of the available measurements.
The scenario must, therefore, specify which sets of measurements are compatible and
can thus be performed together -- or sequentially in any order -- in a single use of the black box.
Sets of compatible measurements are called \emph{measurement contexts}.

This compatibility structure on measurements can be naturally described in terms of a simplicial complex.
Recall that an (abstract) \emph{simplicial complex} on $X$ is a set of finite subsets of $X$, called \emph{faces}, that is non-empty, downwards-closed in the inclusion order, and contains all the singletons. Concretely, these axioms amount to saying that any subset of a compatible set of measurements is a compatible set of measurements, and that any single measurement should be possible.

\begin{definition} 
A
\emph{measurement scenario} is a triple $\X = \XSO$
where:
\begin{itemize}
\item $X$ is a finite set of measurements;
\item $O = (O_x)_{x \in X}$ specifies, for each measurement $x \in X$, a finite non-empty set $O_x$ of  outcomes;
\item $\Sigma$ is a simplicial complex on $X$, whose faces are called the \emph{measurement contexts}.
\end{itemize}
\end{definition}

We will often simply refer to these as scenarios and contexts.
Note that a simplicial complex is determined by its maximal faces, called facets.
Hence, the measurement compatibility structure can be specified by providing only the maximal contexts, as was the case e.g. in \cite{ab}.

\begin{definition}
Let $\XSO$ be a scenario.
For any $U \subseteq X$, we write \[\Ev_O(U) \defeq \prod_{x \in U} O_x \]
for the set of assignments of outcomes to each measurement in the set $U$.
When $U$ is a valid context, these are the joint outcomes one might obtain for the measurements in $U$.
This extends to a sheaf  $\fdec{\Ev_O}{\mathcal{P}(X)^\op}{\Set}$,
with restriction maps $\fdec{\Ev_O({U\subseteq V})}{\Ev_O(V)}{\Ev_O(U)}$ given by the obvious projections.
We call this the \emph{event sheaf}. Whenever it does not give rise to ambiguity, we omit the subscript and denote the event sheaf more simply by $\Ev$.
\end{definition}

We write $\fdec{\Dist}{\Set}{\Set}$ for the functor of finitely-supported probability distributions.
For a set $S$,
\begin{equation*}
\begin{multlined}
\Dist(S) \defeq \\ \setdef{ \fdec{d}{S}{\RRpz} }{ \supp(d) \text{ is finite, } \sum_{s \in S}d(s) = 1 } \Mcomma
\end{multlined}
\end{equation*}
where $\supp(d) \defeq \setdef{s \in S}{d(s)\neq 0}$. The action of $\Dist$ on a  function $\fdec{f}{S}{T}$ is given by pushforward of distributions:
\[\fdecdef{\Dist(f)}{\Dist(S)}{\Dist(T)}{d}{\lambda t \in T. \sum_{s \in S, f(s)=t} d(s)} \Mdot\]
Note that, in particular, the pushforward $\Dist(\pi)$ along a projection ${\fdec{\pi}{S_1 \times S_2}{S_1}}$ corresponds to taking marginal distributions.
\begin{definition}
An \emph{empirical model} $e$ on a scenario $\XSO$, written $e : \XSO$,
is a compatible family for $\Sigma$ on the presheaf $\Dist \circ \Ev$.
More explicitly, it is a family $\family{e_\sigma}_{\sigma \in \Sigma}$ where,
for each $\sigma\in\Sigma$,
\[e_\sigma \in \Dist \circ \Ev(\sigma) = \Dist\left( \prod_{x \in \sigma}O_x \right)\]
is a probability distribution over the joint outcomes for the measurements in the context $\sigma$.
Moreover, compatibility requires that the marginal distributions be well-defined:
for any $\sigma, \tau \in \Sigma$ with
$\tau \subseteq \sigma$,
one must have 
\[e_{\tau} = e_{\sigma}|_\tau = \Dist \circ \Ev(\tau \subseteq \sigma)(e_\sigma) \Mcomma \]
\ie for any $t \in \Ev(\tau)$,
\[e_\tau(t) = \sum_{s \in \Ev(\sigma), s|_\tau = t} e_\sigma(s)\Mdot\]
\end{definition}
Note that compatibility can equivalently be expressed as the requirement that, for all facets (\ie maximal contexts) $C$ and $C'$ of $\Sigma$,
\[e_C|_{C\cap C'} = e_{C'}|_{C\cap C'} \Mdot\]
Compatibility holds for all quantum realizable behaviours \cite{ab}, and generalizes a property known as \emph{no-signalling} \cite{ghirardi1980general}, which we illustrate in the following example.

\begin{example}
Consider a bipartite black box shared between parties Alice and Bob, each of whom may choose to perform as their input one of two measurements.
We call Alice's measurements $x_1$ and $x_2$ and Bob's measurements $y_1$ and $y_2$.
Each measurement outputs an outcome that is either $0$ or $1$.
The situation can be described by a measurement scenario $\XSO$ in which $X=\enset{x_1,x_2,y_1,y_2}$, $O_x = \enset{0,1}$ for all $x \in X$, and the facets of $\Sigma$ are
\[ \enset{ \, \enset{x_1,y_1}, \, \enset{x_1,y_2} , \enset{x_2,y_1}, \enset{x_2,y_2} \, } \Mdot \]

The probabilistic behaviour of such a black box could be given e.g. by Table~\ref{tab:example}.
This happens to show a well-studied behaviour known as a Popescu--Rohrlich (PR) box \cite{popescu1994quantum}.
Rows of this table correspond to maximal measurement contexts, and columns to their joint outcomes.
Each entry of the table gives the probability of obtaining as output the 
joint outcome indexing its column given that the input was the
measurement context indexing its row.
This behaviour is formalized as an empirical model $\family{e_\sigma}_{\sigma \in \Sigma}$, with the entries in each row of the table directly specifying the probability distribution for a facet of $\Sigma$.
It is straightforward to check that these distributions are compatible.
The probability distributions for the non-maximal faces can then be obtained by marginalization.
Note that these marginals are well-defined if and only if compatibility holds.

\begin{table}[tbp]
\centering
$\XYTablePR$
\caption{A PR box.}
\label{tab:example}
\end{table}

In this example, compatibility ensures that the local behaviour on Alice's part of the box, as described by the probability distributions $e_{\enset{x_1}}$ and $e_{\enset{x_2}}$, is independent of Bob's choice of input, and vice versa.
If this were not the case, then it would be possible e.g. for Bob to use the box to instantaneously signal to Alice by altering her locally observable behaviour through his choice of input.
\end{example}

\begin{definition}
An empirical model $e : \XSO$ is said to be \emph{non-contextual} if it is compatible with a global section for ${\Dist \circ \Ev}$. In other words, $e$ is non-contextual if there exists some $d \in \Dist \circ \Ev(X)$, a distribution over global assignments of outcomes to measurements, such that $d |_\sigma = e_\sigma$ for all measurement contexts $\sigma \in \Sigma$.
Otherwise, the empirical model is said to be \emph{contextual}.
\end{definition}

Noncontextuality characterizes classical behaviours.
One way to understand this is that it reflects a situation in which the physical system being measured exists at all times in a definite state assigning outcome values to all properties that can be measured.
Probabilistic behaviour may still arise, but only via stochastic mixtures or distributions on  these global assignments.
This may reflect an averaged or aggregate behaviour,
or an epistemic limitation on our knowledge of the underlying global assignment.

\section{The algebraic viewpoint}\label{sec:algebraic}

\subsection{Operations on empirical models}\label{ssec:operations}

We consider operations that transform and combine emprical models to form new ones.
One should think of these as elementary operations that can be carried out
classically,
\ie without using contextual resources beyond the empirical models given as arguments.
For this reason, these operations are regarded as `free' in the resource theory of contextuality.

Most of the operations presented here
were introduced by a subset of the authors in \cite{abramsky2017contextual}.
A novelty is the idea of conditional measurement,
which is intended to capture (a one-step version of)
the kind of classical control of quantum systems that is used in 
adaptive measurement-based quantum computation schemes. Iterating this construction yields longer protocols of this kind.

For each operation, we give some brief motivating explanation followed by its definition.
All the operations are summarized in Table~\ref{table:ops}, as typing rules.

\newcommand{\eone}{\mathsf{u}}
\newcommand{\ezero}{\mathsf{z}}

\begin{itemize} 
\itemd{Zero model}
Consider the unique scenario with no measurements:
\[\tuple{\emptyset,\Delta_0 = \enset{\emptyset},\family{}} \Mdot\]
There is a single empirical model on this scenario, which we denote by $\ezero$.

\itemd{Singleton model} 
Consider the unique scenario that has a single measurement with a single outcome: 
\[\tuple{\one = \enset{\star},\Delta_1 = \enset{\emptyset, \one},\family{O_\star = \one}} \Mdot\]
There is a single empirical model on this scenario, which we denote by $\eone$.

\itemd{Translation of measurements}
From an empirical model in a given scenario,
we can build another in a different scenario,
by mapping the measurements in the latter scenario to those in the former, taking care to respect compatibilities.
In particular, this can capture the operation of restricting the allowed measurements (or the compatibilities).
Note that it can also mean that two measurements in the new scenario are just different aliases for the same measurement being performed in the original model.

The preservation of compatibilities is captured by the notion of simplicial map.
Given simplicial complexes $\Sigma$ and $\Sigma'$ on sets of vertices $X$ and $X'$, respectively,
a simplicial map $\fdec{f}{\Sigma}{\Sigma'}$ is a function between the vertex sets, $\fdec{f}{X}{X'}$,
that maps faces of $\Sigma$ to faces of $\Sigma'$, \ie such that for all $\sigma \in \Sigma$, $f(\sigma) \in \Sigma'$.

Given an empirical model $e : \XSO$ and a simplicial map 
 $\fdec{f}{\Sigma'}{\Sigma}$, 
the model $f^*e : \XpSpfO$,
where $(f^*O)_x\defeq O_{f(x)}$ for all $x \in X'$,
is defined by pulling $e$ back along the map $f$:
for any $\sigma \in \Sigma'$ and $s \in \Ev_{f^*O}(\sigma)$,
\[(f^*e)_\sigma(s) \defeq \sum_{\substack{ t\in \Ev_O(f(\sigma)) \\ t \circ f|_{\sigma} = s}}e_{f(\sigma)}(t) \Mdot\]

Concretely, $f^*e$ can be implemented from $e$ as follows: when a measurement $x\in X'$ is to be performed, one performs $f(x)$ instead. Requiring $f$ to be a simplicial map guarantees that any set of compatible measurements in $\Sigma'$ can indeed be jointly measured in this manner.

\itemd{Coarse-graining of outcomes}
We can similarly consider a translation of outcomes.

Given $e : \XSO$ and a family of functions $h={\family{\fdec{h_x}{O_x}{O'_x}}_{x \in X}}$,
 define an empirical model $e/h$ on the scenario $\XSOp$
as follows: for each $\sigma \in \Sigma$ and $s \in \Ev_{O'}(\sigma)$ 
\[(e/h)_\sigma(s) \defeq \sum_{\substack{t \in \Ev_O(\sigma) \\ \left(\prod_{x\in\sigma}h_x\right) \circ t = s}}e_\sigma(t)\Mdot\]

One can use $e$ to implement $e/h$: one just performs the measurement and applies the corresponding function $h_x$ to the outcome obtained. 

\itemd{Probabilistic mixing}
We can consider convex combinations of empirical models:
from two models on the same scenario,
a new model is constructed in which a coin, which may be biased, is flipped to choose which of the two models to use.

Given empirical models $e$ and $e'$ in $\XSO$ and a probability value $\lambda \in [0,1]$,
the model $e +_\lambda e' : \XSO$ is given as follows:
for any $\sigma \in \Sigma$ and $s \in \Ev(\sigma)$,
\[(e +_\lambda e')_\sigma(s) \defeq \lambda \, e_\sigma(s) \; + \; (1-\lambda) \, e'_\sigma(s) \Mdot\]

\itemd{Controlled choice}
From two empirical models, we can construct a new model that can be used as either one or the other.
The choice is determined by which measurements are performed,
but the compatibility structure enforces that only one of the original models ends up being used.

Let $e : \XSO$ and $e' : \XpSpOp$ be empirical models.
We consider a new scenario built out of these two.
The measurements are ${X \sqcup X' \defeq \enset{0} \times X \cup  \enset{1} \times X'}$,
with outcomes given accordingly by the copairing $[O,O']$, i.e.:
\begin{align*}
[O, O']_{(0, x)} &= O_x \quad\text{for $x \in X$} \Mcomma
\\
[O, O']_{(1, x)} &= O'_x \quad\text{for $x \in X'$} \Mdot
\end{align*}
The contexts are given by the coproduct of simplicial complexes
\[\Sigma + \Sigma' \defeq  \setdef{\enset{0}\times\sigma}{\sigma \in \Sigma} \cup \setdef{\enset{1}\times\sigma}{\sigma \in \Sigma} \Mcomma\]
ensuring that all the measurements performed in a single use come from the same of the two original scenarios, so that only one of the empirical models is used. 

The empirical model $e \choice e' : \tuple{X \sqcup X', \Sigma + \Sigma',[O, O']}$
is given as
\begin{align*}
(e \choice e')_{\enset{0} \times \sigma} &\defeq e_\sigma \quad\text{for $\sigma \in \Sigma$} \Mcomma
\\
(e \choice e')_{\enset{1} \times \sigma} &\defeq e'_\sigma \quad\text{for $\sigma \in \Sigma'$} \Mdot
\end{align*}

\itemd{Tensor product}
From two empirical models, a new one is built that
allows the use of both models \textit{independently}, in parallel.

Let $e : \XSO$ and $e' : \XpSpOp$ be empirical models.
As in the previous case,
consider a new scenario with measurements $X \sqcup X'$ and outcomes $[O,O']$.
The difference is that the contexts are now given by the simplicial join
\[\Sigma \star \Sigma' \defeq \setdef{\sigma \sqcup \sigma'}{\sigma \in \Sigma, \sigma' \in \Sigma'} \Mcomma\]
corresponding to the fact one may use measurements from the two scenarios in parallel.
The empirical model 
$e \otimes e' : \tuple{X \sqcup X', \Sigma \star \Sigma',[O , O']}$,
is given as
\[(e \otimes e')_{\sigma \sqcup \sigma'}\tuple{s,s'} \defeq e_C(s) \, e'_{C'}(s')\]
for all $\sigma \in \Sigma$,  $\sigma' \in \Sigma'$, $s \in \Ev_{O}(\sigma)$, and $s' \in \Ev_{O'}(\sigma')$.

\itemd{Conditioning on a measurement}
Given an empirical model, one may perform two compatible measurements in sequence.
But in such a situation,
when one decides to perform the second measurement, the outcome of the first is already known.
We could therefore consider the possibility of choosing which measurement to perform second
depending on the outcome observed for the first measurement.
This process can be considered as a 
measurement in its own right yielding as its outcome the pair consisting of the outcomes of the two constituent
measurements.
We can extend the original empirical model with such an extra measurement. 

In order to define this operation, we need the concept of link of a face in a simplicial complex.
Given a simplicial complex $\Sigma$ and a face $\sigma \in \Sigma$,
the link of $\sigma$ in $\Sigma$ is the subcomplex of $\Sigma$ whose faces are
\[\lk_\sigma\Sigma \defeq \setdef{\tau \in \Sigma}{\sigma \cap \tau = \emptyset, \sigma \cup \tau \in \Sigma} \Mdot\] 
If we think of $\Sigma$ as representing the compatibility structure of a measurement scenario,
and suppose that the measurements in a face $\sigma$ have already been performed,
then the complex $\lk_\sigma\Sigma$
represents the compatibility structure of the measurements that may still be performed,
ensuring that overall one always performs a set of compatible measurements according to $\Sigma$.

Let $e : \XSO$ be an empirical model,
and take a measurement $x \in X$ and a family of measurements
$\family{y_o}_{o \in O_x}$ with $y_o \in \Vertices(\lk_{\enset{x}}\Sigma)$
a vertex of the complex
\[\lk_{\enset{x}}(\Sigma) = \setdef{\sigma}{x \not\in \sigma, \enset{x} \cup \sigma \in \Sigma}\Mdot\]
Consider a new measurement $x ? \family{y_o}_{o \in O_x}$, abbreviated $x ? y$.
We call such a measurement a \emph{conditional measurement}.
Its outcome set is the dependent pair type
\[{O}_{x ? y} \defeq \bigsqcup_{o\in O_x} O_{y_o} = \setdef{(o,o')}{o \in O_x, o' \in O_{y_o}} \Mdot\]
The compatibility structure is extended to take the new measurement into account:
\[\Sigma[x?y] \defeq \Sigma \cup \setdef{\sigma \cup \enset{x?y}}{\Forall{o \in O_x} \sigma \cup \enset{x,y_o} \in \Sigma}\Mdot\]

Define the new model
\[e[x?y] : \tuple{X \cup \enset{x?y},\Sigma[x?y],O[x?y\mapsto O_{x?y}]}\]
as follows: for the old faces $\sigma \in \Sigma$,
\[
e[x?y]_\sigma \defeq e_\sigma
\Msemicolon
\]
for the new faces of the form $\sigma \cup \enset{x?y}$ satisfying ${\sigma \cup \enset{x,y_o} \in \Sigma}$ for all $o \in O_x$, we have,
for any $s \in \Ev(\sigma)$ and $(o,o') \in O_{x?y}$,
\begin{equation*}
\begin{multlined}
e[x?y]_{\sigma \cup \enset{x?y}}(s[x?y \mapsto (o,o')]) \defeq \\
\begin{cases}
   \mathrlap{e_{\sigma \cup \enset{x,y_o}}(s[x\mapsto o, y_o \mapsto o'])} & \\
      & \text{ if $x \in \sigma \implies s(x)=o$ and  $y_o \in \sigma \implies s(y_o)=o'$} \\
   0  & \text{ otherwise.}
\end{cases}
\end{multlined}
\end{equation*}
\end{itemize}

\subsection{The contextual fraction}\label{ssec:cf}

The contextual fraction is a quantitative measure of the degree to which any empirical model exhibits contextuality \cite{ab,abramsky2017contextual}, which we define here using the operation of probabilistic mixing.

\begin{definition}
Given an empirical model $e : \XSO$, we consider the set of all possible decompositions
\[ e = e^{\mathsf{NC}} +_\lambda e' \Mcomma \]
such that $e^{\mathsf{NC}}$ is non-contextual.
The \emph{non-contextual fraction} of $e$, denoted $\NCF(e)$, is defined to be the maximum value of $\lambda$ over all such decompositions.
The \emph{contextual fraction} of $e$, denoted $\CF(e)$, is then defined as
\[ \CF(e) \defeq 1 - \NCF(e) \Mdot \]
\end{definition}

A crucial property for a useful measure of contextuality is that it should be a monotone for the free operations of our resource theory.
That is, it should be non-increasing under those elementary operations on empirical models that can be carried out classically.

\begin{proposition}\label{prop:ncfandoperations}
For the operations we have introduced in Section~\ref{ssec:operations} the contextual fraction satisfies the following monotonicity properties.
\begin{itemize}
\item $\CF(\ezero) = \CF(\eone) = 0$
\item $\CF(f^*e) \leq \CF(e)$
\item $\CF(e/h) \leq \CF(e)$
\item $\CF(e +_\lambda e') \leq \lambda \CF(e) + (1-\lambda) \CF(e')$
\item $\CF(e \choice e') = \max\enset{\CF(e),\CF(e')}$
\item $\CF(e \otimes e') = \CF(e) + \CF(e') - \CF(e)\CF(e')$, \\
\ie $\NCF(e \otimes e') = \NCF(e)\NCF(e')$
\item $\CF(e[x?y]) = \CF(e)$
\end{itemize}
\end{proposition}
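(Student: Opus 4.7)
My strategy is to isolate two general properties that drive almost all of these bounds, and then handle the tensor-product equality separately. The properties are: (i) \emph{affineness} --- each operation $\Phi$ is defined by a fixed linear recipe in the probability data of its model argument(s), so it commutes with convex combinations; (ii) \emph{preservation of non-contextuality} --- if each input model admits a global section on its event sheaf, then so does the output. Property (i) is visible by inspection of the formulas in Section~\ref{ssec:operations} (pullback along $f$, pushforward along $h$, coproduct, etc.); for (ii) one checks in each case that global sections transport: e.g.\ for $f^{*}e$ take $\Ev_{f^{*}O}(X') \to \Ev_O(X)$ induced by $f$ and pushforward; for $e \choice e'$ take the pair $(d,d')$ on $X \sqcup X'$; for $e \otimes e'$ take the product distribution $d \times d'$ on $X \sqcup X'$; and for $e[x?y]$ extend $d$ by reading off the dependent pair $(d(x), d(y_{d(x)}))$.

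Once (i) and (ii) are in hand, the two constant cases and the unary cases drop out immediately: pick a maximal decomposition $e = e^{NC} +_{\NCF(e)} e'$, apply $\Phi$, and use (ii) to exhibit a non-contextual decomposition of $\Phi(e)$ of weight $\NCF(e)$. For $e +_\lambda e'$, choosing maximal decompositions of both arguments and expanding gives a non-contextual part of weight $\lambda\NCF(e) + (1-\lambda)\NCF(e')$. For $e \choice e'$, the inequality $\NCF(e \choice e') \geq \min\{\NCF(e), \NCF(e')\}$ follows by equalising the weights of the two decompositions (one can always inflate the residual, writing $e' = \mu_1\,\tilde{e}_2^{NC} + (1-\mu_1)\,\tilde{e}_2^C$ for any $\mu_1 \leq \NCF(e')$) and combining using (ii); the matching inequality $\NCF(e \choice e') \leq \min\{\NCF(e),\NCF(e')\}$ follows because the inclusions of $\Sigma$ and $\Sigma'$ into $\Sigma + \Sigma'$ are simplicial, so any decomposition of $e \choice e'$ restricts to decompositions of $e$ and $e'$ of the same weight. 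Similarly, for $e[x?y]$ the inclusion $\Sigma \hookrightarrow \Sigma[x?y]$ is a simplicial map whose pullback recovers $e$ from $e[x?y]$, giving the reverse bound $\NCF(e) \geq \NCF(e[x?y])$ from the $f^{*}$ clause already established.

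The main obstacle is the tensor-product identity $\NCF(e \otimes e') = \NCF(e)\NCF(e')$. The $\geq$ direction follows from the template: expanding $(\mu_1 e_1^{NC} + (1-\mu_1) e_1^C) \otimes (\mu_2 e_2^{NC} + (1-\mu_2) e_2^C)$ gives $\mu_1\mu_2$ times the non-contextual $e_1^{NC} \otimes e_2^{NC}$ plus a convex combination of the three remaining products, with coefficients summing to $1 - \mu_1\mu_2$. The delicate $\leq$ direction I would prove via the LP characterisation of $\NCF$ from \cite{abramsky2017contextual}: $1 - \NCF(e)$ equals the optimum of a linear program whose dual witnesses are normalised Bell-type inequalities separating contextual behaviours from the non-contextual polytope. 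Given optimal dual witnesses $B$ for $e$ and $B'$ for $e'$, I would form their product functional on the joint scenario $\Sigma \star \Sigma'$ and check two things: that its value on $e \otimes e'$ factors as $\langle B, e\rangle \cdot \langle B', e'\rangle$, which is immediate from the product form of $(e\otimes e')_{\sigma \sqcup \sigma'}$; and that the product functional is still a valid non-contextuality inequality, which follows because global sections on $\Sigma \star \Sigma'$ lie in the convex hull of product deterministic assignments, on which the product functional factorises and is bounded by the product of the individual bounds. Multiplicativity of the dual optima then yields $\NCF(e\otimes e') \leq \NCF(e)\NCF(e')$.

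The hardest routine check is the second half of the tensor argument, namely that the product of two tight non-contextuality inequalities is itself a non-contextuality inequality; the rest amounts to bookkeeping with convex combinations and simplicial maps.
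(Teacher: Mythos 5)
Your route differs from the paper's mainly in scope: the paper proves only the final item, $\CF(e[x?y])=\CF(e)$, and defers every other item to the cited earlier work, whereas you reprove everything from scratch. On the item the paper does prove, your argument coincides with its: the bound $\CF(e)\leq\CF(e[x?y])$ via $e=f^*e[x?y]$ for the inclusion $f\colon\Sigma\to\Sigma[x?y]$, and the reverse bound via the fact that $\text{--}[x?y]$ preserves convex combinations and non-contextual models. Your general template (affineness plus preservation of non-contextuality, applied to a maximal decomposition, together with restriction along simplicial inclusions for the converse bounds) is exactly the mechanism the paper uses there, and it correctly disposes of the constants, $f^*$, coarse-graining, $+_\lambda$, $\choice$, and one direction of each of the two equalities.

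The one genuine gap is in the direction $\NCF(e\otimes e')\leq\NCF(e)\NCF(e')$. You claim the product of two optimal Bell-type witnesses is again a non-contextuality inequality because its value on a product deterministic assignment is ``bounded by the product of the individual bounds''. For general Bell functionals this is false: from $\langle B,g\rangle\leq R$ and $\langle B',g'\rangle\leq R'$ one cannot conclude $\langle B,g\rangle\langle B',g'\rangle\leq RR'$ when the left-hand values may be negative (take both equal to $-10$ with $R=R'=1$). The step is salvageable, but only after fixing the right normal form for the dual: write $\NCF(e)$ as the optimum of $\min\,\langle y,e\rangle$ over vectors $y\geq 0$ indexed by local events $(C,s)$, subject to $\sum_{C}y_{C,g|_C}\geq 1$ for every global assignment $g$. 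With non-negative $y,y'$ the product witness $y''_{C\sqcup C',(s,s')}=y_{C,s}\,y'_{C',s'}$ is again feasible, since the constraint factorises into a product of two quantities each at least $1$, and its value on $e\otimes e'$ factorises into $\langle y,e\rangle\langle y',e'\rangle$, giving the desired bound. You flagged this as the hardest routine check; it is in fact the one place where the argument as written would fail, so the non-negativity of the dual witnesses needs to be made explicit rather than left implicit in the phrase ``Bell-type inequalities''.
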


\begin{proof}
We will only show that $\CF(e[x?y]) = \CF(e)$, as the other statements were proved in~\cite{abramsky2017contextual}. The inequality ${\CF(e) \leq \CF(e[x?y])}$ holds by monotonicity of measurement translation, since $e = f^* e[x?y]$ where ${\fdec{f}{\Sigma}{\Sigma[x?y]}}$ is the inclusion map.

For the other direction, note that $\text{--}[x?y]$ preserves convex combinations and deterministic empirical models,
\ie those models that arise from (a delta distribution on) a single global assignment. Since non-contextual models are precisely those that are convex combinations of deterministic models, the operation $\text{--}[x?y]$ takes non-contextual models to non-contextual models. Now, if $e=e^{\mathsf{NC}} +_\lambda e'$ where $e^{\mathsf{NC}}$ is non-contextual, then ${e[x?y]=e^{\mathsf{NC}}[x?y] +_\lambda e' [x?y]}$ and  $e^{\mathsf{NC}}[x?y]$ is also non-contextual. Consequently, $\NCF(e[x?y])\geq \NCF(e)$, which means that $\CF(e[x?y])\leq\CF(e)$.
\end{proof}

\subsection{Equational theory}\label{ssec:eqtheory}
We consider terms built out of variables and the operations of Section~\ref{sec:algebraic}:
\begin{align*}
\mathsf{Terms} \ni t \;\;\coloncolonequals\;\; &
  v \in \mathsf{Var}
\;\mid\;  \ezero
\;\mid\;  \eone
\;\mid\;  f^* t 
\;\mid\;  t/h \\
\;\mid&\; t +_\lambda t
\;\mid\;  t \choice t
\;\mid\;  t \otimes t
\;\mid\;  t[x?y]
\end{align*}
according to the `typing' rules in Table~\ref{table:ops}.
Note that, due to the restriction forbidding repeated variables when building  contexts,
there is at most one occurrence of each variable in each well-typed term. 
We can interpret such a term as a composed `free' operation on empirical models.
More specifically, the typed term
\[v_1 : \X_1, 
\ldots, v_n : \X_n 
\vdash t : \X 
\]
represents an operation that takes $n$ empirical models,
$e_i : \X_i$ 
for $i \in \enset{1,\ldots,n}$,
and builds a new empirical model, denoted $t[e_1/v_1,\ldots,e_n/v_n]$,
on the scenario $\X$, 
according to the definition of the elementary operations given in the itemized list above.

Terms without variables should therefore correspond to empirical models that are `free' as resources. Indeed, they are precisely the non-contextual ones.

\begin{proposition}\label{prop:novarterm}
 A term without variables always represents a non-contextual empirical model. Conversely, every non-contextual empirical model can be represented by a term without variables.
\end{proposition}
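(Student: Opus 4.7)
The plan has two directions. For the forward direction---every variable-free term denotes a non-contextual model---I would induct on the structure of the term. The base cases $\ezero$ and $\eone$ are non-contextual by the first clause of Proposition~\ref{prop:ncfandoperations}. The step cases follow directly from the remaining clauses of that same proposition: each tells us that the corresponding constructor sends $\CF = 0$ inputs to a $\CF = 0$ output. For example, $\CF(e \otimes e') = \CF(e) + \CF(e') - \CF(e)\CF(e')$ vanishes when both arguments do; the other clauses are even more immediate. Hence non-contextuality propagates through every constructor.

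For the converse, let $e : \XSO$ be non-contextual with witnessing global section $d \in \Dist \circ \Ev(X)$. Writing $d = \sum_{s \in \supp(d)} d(s) \, \delta_s$ as a finite convex combination of Dirac distributions on global assignments, compatibility gives $e_\sigma = d|_\sigma = \sum_s d(s) \, (\delta_s)|_\sigma$ for every $\sigma \in \Sigma$, so $e$ itself is a finite convex combination of the deterministic empirical models $\bar{\delta}_s$ on $\XSO$. Iterating $+_\lambda$ expresses any finite convex combination of models on a common scenario, so the task reduces to exhibiting a variable-free term for each $\bar{\delta}_s$.

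To build $\bar{\delta}_s$, I would proceed one measurement at a time. For each $x \in X$, start from $\eone$, apply measurement translation along the simplicial map $\fdec{f_x}{\enset{\emptyset,\enset{x}}}{\Delta_1}$ sending $x \mapsto \star$ to get a one-measurement model with outcome set $\one$, and then apply outcome translation via $h_x : \one \to O_x$ with $\star \mapsto s(x)$. The resulting $t_x$ is deterministic on the one-vertex scenario $\tuple{\enset{x}, \enset{\emptyset,\enset{x}}, \enset{O_x}}$ with outcome $s(x)$. Taking the iterated tensor product $\bigotimes_{x \in X} t_x$ yields a deterministic model on the simplicial join of these one-vertex complexes, which---modulo the canonical disjoint-union tagging of measurement labels---is the full simplex on $X$ with outcomes $O$, globally assigning $s$. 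A final measurement translation along the inclusion $\Sigma \hookrightarrow \mathcal{P}(X)$ (identity on vertices, after undoing the tags) then cuts the complex of contexts back down to $\Sigma$, yielding $\bar{\delta}_s$.

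The main obstacle I expect is purely bookkeeping: iterated tensoring generates measurement labels in an iterated disjoint union, and the final measurement translation must simultaneously strip these tags and restrict the complex to $\Sigma$. Verifying that the pulled-back outcome assignment matches $O$ on the nose and that the induced distributions are the expected Dirac masses is routine once the renaming simplicial map is spelled out, so no genuinely delicate step arises; in particular, neither probabilistic mixing inside the construction of each $\bar{\delta}_s$ nor the conditional measurement operation is needed.
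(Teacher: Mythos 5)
Your proof is correct, and its skeleton coincides with the paper's: the forward direction is the same induction via Proposition~\ref{prop:ncfandoperations}, and the converse is reduced in the same way to realising each deterministic model $\bar{\delta}_s$ as a closed term, using $+_\lambda$ to reassemble the convex decomposition. Where you genuinely diverge is in the deterministic step. You build $\bar{\delta}_s$ one measurement at a time, tensoring $|X|$ relabelled copies of $\eone$ into a full simplex on a tagged copy of $X$ and then pulling back along the inclusion of $\Sigma$ into that simplex to cut the contexts down and strip the tags. The paper instead uses a single pullback in the opposite direction: take $f$ to be the unique simplicial map $\fdec{f}{\Sigma}{\enset{\star}}$ collapsing every measurement of $\X$ onto the single measurement of the singleton scenario, so that $f^*\eone$ is already a model on $\Sigma$ in which every measurement deterministically returns $\star$, and then relabel outcomes by $h_x(\star)=e(x)$, giving $e=(f^*\eone)/h$ in one line. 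Both constructions yield the same model up to isomorphism; the paper's version buys you the absence of the disjoint-union tagging and final renaming that you correctly flag as the bookkeeping burden of your route, and it dispenses with the tensor product entirely, so deterministic models are seen to lie in the fragment generated by $\eone$, $f^*(-)$ and $-/h$ alone. Your version remains perfectly valid, just less economical.
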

\begin{proof}
  Using Proposition~\ref{prop:ncfandoperations}, it is straightforward to show by induction that every term $t$ without variables satisfies $\CF(t)=0$, which proves the first claim. 
  
  For the second claim, note that since probabilistic mixing is an allowed operation and non-contextual empirical models are precisely the mixtures of deterministic models, it suffices to show that every deterministic empirical model can be built from the operations. So let $e \colon \XSO$ be deterministic,
  and write $e(x)\in O_x$ for the certain outcome it assigns to measurement $x \in X$.
  Then $e=(f^*\eone)/h$,  where $f$ is the unique simplicial map $\Sigma\lto\enset{\star}$ and $h=\family{\fdec{h_x}{\enset{\star}}{O_x}}_{x\in X}$ is defined by $h_x(\star)=e(x)$.
\end{proof}

\begin{table*}[t]%
\caption{Free operations on empirical models}\label{table:ops}
\centering
{\begin{tabular}{|l l|}\hline & \\
\multicolumn{2}{|c|}{$v \in \mathsf{Var}$ \quad\quad\quad $t, t' \in \mathsf{Terms}$ \quad\quad\quad
$\XSO \in \mathsf{Scenarios}$ \quad\quad\quad $\Gamma, \Gamma' \coloncolonequals \emptyset \mid \enset{v : \XSO}, \Gamma$ ($v \not\in\Gamma$)}
\\ & \\
\multicolumn{2}{|l|}{{\rvar} \qquad\qquad\qquad\quad {\rzero} \qquad\qquad\qquad\quad {\rsingl}}
\\ &\\
{\rmeas} & {\routc}
\\ &\\
{\rmix} & {\rchoice} 
\\ &\\
{\rcond} & {\rprod}
\\ & \\ \hline
\end{tabular}}
\end{table*}

We present a list of equations \eqref{eq:thfirst}--\eqref{eq:thlast} between terms, with variables denoted $a,b,c,d$. These equations are supposed to capture equality up to a static congruence -- isomorphism of empirical models defined below. Implicit is the assumption that the terms on both sides of the equality signs are well-typed in the same typing context, according to the rules of Table~\ref{table:ops}.
That is, as we shall see in Proposition~\ref{prop:soundness}, when we write $t = t'$,
we are implicitly thinking of all the typing contexts 
$\Gamma$ such that
$\Gamma \vdash t : \X$ and $\Gamma \vdash t' :\X'$.


For most of these equations,
it is enough that the term on the left-hand side be well-typed in a given context for the term on the right-hand side
to also be.
The exception to this rule is equation \eqref{eq:commentaboutdefined}.
It is important not to be misled into reading it as meaning that any two consecutive extensions with conditional measurements commute.
This is only the case when both are conditional measurements of the original model,
\ie when the second conditional measurement does not make use of the first.
In the notation of the equation in question, assuming that the term on the left is well-typed,
we would require that $x' \neq x?y$ and $y'_o \neq x?y$ for all $o \in O_x$ in order to be able to also type the term on the right.

\begin{itemize}
\item The controlled choice is a commutative monoid with neutral element $\ezero$:
\begin{shiftflalign}&
\label{eq:thfirst}
a \choice b = b \choice a 
&\\&
a \choice (b \choice c) = (a \choice b) \choice c 
&\\&
a \choice \ezero = a = \ezero \choice a
&\end{shiftflalign}
\item The product is a commutative monoid with neutral element $\ezero$:
\begin{shiftflalign}&
a \otimes b = b \otimes a
&\\&
a \otimes (b \otimes c) = (a \otimes b) \otimes c
&\\&
a \otimes \ezero = a = \ezero \otimes a
&\end{shiftflalign}
\item Standard axioms of convex combinations:
\begin{shiftflalign}&
a +_0 b = a
&\\&
a +_\lambda a = a
&\\&
a +_\lambda b = b +_{1-\lambda} a \label{eq:convexcomm}
&\\&
(a +_\lambda b) +_{\lambda'} c = a +_{\lambda\lambda'} (b +_{\frac{\lambda'-\lambda\lambda'}{1-\lambda\lambda'}} c) \label{eq:convexassoc}
&\end{shiftflalign}

\item Measurement and outcome transformations:
\begin{shiftflalign}&
g^*(f^*a) = (f \circ g)^* a \label{eq:transmeas}
&\\&
(a/h)/j = a/(j \circ h) \label{eq:transoutc}
&\\&
f^*(a/h) = f^*a / f^*h \label{eq:transmeasoutc}
&\end{shiftflalign}
where $\fdec{(f^*h)_x = h_{f(x)}}{O_{f(x)}}{O'_{f(x)}}$.

\item Convex combinations and the other operations:
\begin{shiftflalign}&
f^*(a +_\lambda b) = f^*a +_\lambda f^*b \label{eq:convexotherfirst}
&\\&
(a +_\lambda b)/h = a/h +_\lambda b/h
&\\&
(a +_\lambda b) \choice (c +_\lambda d) = (a \choice c) +_\lambda (b \choice d) 
&\\&
(a +_\lambda b) \otimes c = (a \otimes c) +_\lambda (b \otimes c) 
&\\&
(a+_\lambda b)[x?y] = a[x?y] +_\lambda b[x?y] \label{eq:convexotherlast}
\end{shiftflalign}
\item  Transformations and other operations:
\begin{shiftflalign}&
a/h \choice b/j = (a \choice b)/[h,j] \label{eq:transotherfirst}
&\\&
a/h \otimes b/j = (a \otimes b)/[h,j]\label{eq:transothertensorout}
&\\&
f^*a \choice g^*b = [f,g]^* (a \choice b)
&\\&
f^*a \otimes g^*b = [f,g]^* (a \otimes b) \label{eq:transotherlasst}
&\end{shiftflalign}
\item Conditional measurements and other operations
\begin{shiftflalign}&
a[x?y][x'?y'] = a[x'?y'][x?y] \label{eq:commentaboutdefined} 
&\\&
(f^*a)[x?y] = \tilde{f}^*(a[f(x)?(f\circ y)]) \label{eq:condmeas}
&\end{shiftflalign}
where, for $\fdec{f}{\Sigma'}{\Sigma}$, we have that  $\fdec{\tilde{f}}{\Sigma'[x?y]}{\Sigma}$ is the extension
$\tilde{f} \defeq f[x?y \mapsto f(x)?(f\circ y)]$.
\begin{shiftflalign}&
(a/h)[x?y]=(a[x?(y\circ h_x)])/\tilde{h} \label{eq:condoutc}
&\end{shiftflalign}
where for $\family{\fdec{h_x}{O_x}{O'_x}}_{x \in X}$,
we have $(y\circ h_x)_o = y_{h_x(o)}$ for all $o \in O_x$, and
$\tilde{h}$ extends the family $h$ with $\tilde{h}_{x?(y\circ h_x)}$ mapping a pair $(o\in O_x, \, o'\in O_{(y\circ h_x)_o})$ to $(h_x(o)\in O'_x, \, h_{(y\circ h_x)_o}(o') \in O'_{y_{h_x(o)}})$.
\begin{shiftflalign}&
a[x?y] \choice b = (a \choice b)[x?y] \label{eq:condchoice}
&\\&
a[x?y] \otimes b = (a \otimes b)[x?y] \label{eq:condtimes}
&\end{shiftflalign}
\item Choice can be eliminated:
\begin{shiftflalign}&
a \choice b = i^*(a \otimes b)\label{eq:thlast}
&\end{shiftflalign}
where $\fdec{i}{\Sigma + \Sigma'}{\Sigma\star\Sigma'}$ is the inclusion of simplicial complexes (it acts as identity on the vertices).
\end{itemize}

The above equational theory intends to capture equality up to the following notion of isomorphism.

\begin{definition}
 Two empirical models $e: \XSO$ and $d: \XpSpOp$ are said to be isomorphic, written $e \simeq d$, if there is a simplicial isomorphism $\isofdec{f}{\Sigma'}{\Sigma}$ and a family of bijections $\family{\isofdec{h_x}{O_{f(x)}}{O'_x}}_{x\in X'}$ such that $d = (f^*e)/h$. 
\end{definition}

Note that these isomorphisms coincide exactly with the isomorphisms of  the category \cat{Emp} that is defined in the next section.

\begin{proposition}[Soundness]\label{prop:soundness}
The equational theory given by equations \eqref{eq:thfirst}--\eqref{eq:thlast} is sound.
That is, if $t=t'$ is one of these equations, then for any context $\Gamma = \enset{v_1 : \X_1, \ldots, v_n : \X_n}$
such that $\Gamma \vdash t : \X$ and $\Gamma \vdash t' : \X'$
and for any empirical models $e_1 : \X_1, \ldots, e_n : \X_n$,
we have
\[t[e_1/v_1, \ldots, e_n/v_n] \simeq t'[e_1/v_1, \ldots, e_n/v_n] \Mdot\]
\end{proposition}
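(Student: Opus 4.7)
The plan is to verify each of the equations \eqref{eq:thfirst}--\eqref{eq:thlast} one at a time by unfolding the definitions given in Section~\ref{ssec:operations} and exhibiting an isomorphism of empirical models on the nose, or showing equality when the scenarios on both sides literally agree. Since each well-typed term is built from the operations by induction, and since the interpretation of each operation is defined compositionally, it is enough to check each equational axiom in isolation: the full soundness follows because all the operations respect $\simeq$, which in turn reduces to checking that pullback along a simplicial isomorphism and push-forward along a family of bijections are compatible with each constructor (straightforward from functoriality of $\Ev$ and $\Dist$).

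I would then group the equations as follows. First, the commutative monoid laws for $\choice$ and $\otimes$ and the unit laws with $\ezero$ are immediate from the fact that disjoint union of vertex sets is symmetric monoidal, with the sum of simplicial complexes (respectively, the join) as the tensor: the required isomorphisms are the evident swap and reassociation maps between $X \sqcup X'$'s, pulled back through. The convex combination axioms \eqref{eq:convexcomm}--\eqref{eq:convexassoc} are standard and reduce to pointwise computations on coefficients in $[0,1]$. The functoriality equations \eqref{eq:transmeas}--\eqref{eq:transmeasoutc} are direct consequences of the functoriality of $\Ev_{(-)}$ and of $\Dist$. The distributivity of convex mixing over the other operations \eqref{eq:convexotherfirst}--\eqref{eq:convexotherlast} all reduce to the fact that push-forward under $\Dist$ is linear and all the other operations are defined by formulae that are linear in each argument empirical model. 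The equations \eqref{eq:transotherfirst}--\eqref{eq:transotherlasst} are naturality statements expressing that $\choice$ and $\otimes$ are (bi-)functorial with respect to measurement translation and coarse-graining, and follow by chasing an assignment $s \in \Ev(\sigma \sqcup \sigma')$ through both sides. Finally, \eqref{eq:thlast} holds by unwinding the definitions: on a context of the form $\enset{\epsilon}\times\sigma \in \Sigma + \Sigma'$ (with $\epsilon \in \enset{0,1}$), the tensor $e \otimes e'$ pulled back along the inclusion $i$ produces the marginal which is exactly $e_\sigma$ or $e'_\sigma$ because the ``other side'' has the empty context, whose distribution is the unit in $\Dist(\Ev(\emptyset))$.

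The conditional measurement equations \eqref{eq:commentaboutdefined}--\eqref{eq:condtimes} are where the main bookkeeping lives, and I expect this to be the hardest step. For \eqref{eq:commentaboutdefined}, under the well-typedness hypothesis that neither conditional uses the other's output as its branching ``second'' measurement, one verifies that $\Sigma[x?y][x'?y'] = \Sigma[x'?y'][x?y]$ as simplicial complexes and that the probability assigned to a new joint context coincides on both sides because it is, in each case, determined by the same original facet $\sigma \cup \enset{x,y_o,x',y'_{o'}} \in \Sigma$. For \eqref{eq:condmeas} and \eqref{eq:condoutc}, the delicate point is to confirm that the extended simplicial map $\tilde f$ and the extended outcome-translation $\tilde h$ described in the statement are indeed well-defined and that the branching indexing on $O_{x?y}$ lines up with that on $O_{f(x)?(f\circ y)}$ (respectively on $O_{x?(y\circ h_x)}$); this amounts to a careful case split on whether the ``first'' measurement $x$ (or its image) already appears in $\sigma$, exactly paralleling the case analysis in the definition of $e[x?y]$. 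Equations \eqref{eq:condchoice}--\eqref{eq:condtimes} are then straightforward: in both $\choice$ and $\otimes$, the new measurement $x?y$ only involves vertices from the first scenario, so extending $a$ by a conditional and then combining with $b$ yields the same complex and the same distributions as combining first and then extending.

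Throughout, I would keep the isomorphism data completely explicit (the simplicial isomorphism and the family of outcome bijections) so that at the end the required $\simeq$ is witnessed, not just an abstract equality of distributions up to relabelling. The main obstacle, as indicated, is the notational overhead of the conditional-measurement equations: once one checks that the definitions of $\Sigma[x?y]$, $O_{x?y}$, and $e[x?y]$ commute appropriately with simplicial maps, with outcome translations, and with each other under the stated typing side-conditions, the remaining axioms drop out from symmetry and linearity considerations already implicit in the definitions.
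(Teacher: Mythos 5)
Your proposal is correct and matches the paper's intent: the paper itself gives no details, stating only that the proof is ``a tedious but straightforward verification of the conditions,'' and your case-by-case plan (monoid laws via symmetry of $\sqcup$, linearity for the mixing equations, functoriality for the translation equations, and the explicit bookkeeping for the conditional-measurement axioms under the stated typing side-conditions) is exactly the verification the authors have in mind. Your additional observation that the operations respect $\simeq$ is the right extra ingredient for the equational theory to be usable as a rewriting system, e.g.\ in the normal-form result.
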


The proof is a tedious but straightforward verification of the conditions.
It is an open question whether this equational theory is complete.
An important step towards proving completeness -- or towards finding the missing equations --
is provided by the following normal form result.
It establishes that, using the equations,
we can transform any term into one where the operations are applied in a certain order.

\begin{proposition}[Normal form]\label{prop:normalform}
Let $\Gamma \vdash t : \X$. 
Then $t$ can be rewritten using equations \eqref{eq:thfirst}--\eqref{eq:thlast}
into a term $t_0$ of the following form:  
\begin{align*}
    t_0 &\coloncolonequals t_1 \mid t_0 +_\lambda t_1
\\
    t_1 &\coloncolonequals (f^*t_2)/h
\\
    t_2 &\coloncolonequals t_3 \mid t_2[x?y]
\\
    t_3 &\coloncolonequals t_4 \mid t_4 \otimes t_3
\\
    t_4 &\coloncolonequals \ezero \mid \eone \mid v \in \textsf{Var}
\end{align*}
\end{proposition}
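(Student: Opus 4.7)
I would argue by structural induction on $t$, showing how to rewrite each outermost operation so that the invariants of the normal form are maintained, working in the layering order $t_4 \subset t_3 \subset t_2 \subset t_1 \subset t_0$. The base cases $\ezero$, $\eone$, and $v \in \mathsf{Var}$ already lie in $t_4$, and can be promoted to $t_1$ by presenting them as $(\id^*(\,\cdot\,))/\id$. Before starting the inductive clauses, I would eliminate $\choice$ outright using equation \eqref{eq:thlast}, so that only $+_\lambda$, $\otimes$, $f^*$, $/h$, and $[x?y]$ remain at top level.

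For the four unary-style cases, I first apply the relevant distributivity equations \eqref{eq:convexotherfirst}--\eqref{eq:convexotherlast} to push convex combinations outward, reducing to the case where the argument is a single $t_1 = (g^*u)/h'$. Then: for $f^*(\cdot)$ I use \eqref{eq:transmeasoutc} and \eqref{eq:transmeas} to collapse to $((g\circ f)^*u)/f^*h'$; for $(\cdot)/h$ I use \eqref{eq:transoutc} to collapse to $(g^*u)/(h\circ h')$; for $(\cdot)[x?y]$ I use \eqref{eq:condoutc} to move the conditional past $/h'$, followed by \eqref{eq:condmeas} to move it past $g^*$, producing a new $t_1$ whose interior $t_2$-part has merely one more conditional measurement stacked on. In each case the result is manifestly of the shape $(F^*U)/H$ with $U$ still a sequence of conditional measurements on a $t_3$, i.e. a $t_1$.

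For the tensor case $t \otimes t'$, I again distribute over $+_\lambda$ first (using \eqref{eq:convexotherlast} plus commutativity of $\otimes$ to handle sums on either side), reducing to $(f^*u)/h \otimes (g^*u')/h'$. Equation \eqref{eq:transothertensorout} then pulls $/[h,h']$ to the outside, \eqref{eq:transotherlasst} pulls $[f,g]^*$ to the outside, leaving $u \otimes u'$ inside. Since $u,u'$ are in $t_2$ form, I iteratively apply \eqref{eq:condtimes} (together with commutativity of $\otimes$) to float every conditional measurement out of the tensor, leaving a tensor of $t_3$ terms, which is itself a $t_3$, inside the accumulated conditional wrapping. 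Finally, for mixing $t +_\lambda t'$, each of $t$ and $t'$ is already a left-nested sum of $t_1$'s by induction, and I would use the convex-combination axioms \eqref{eq:convexcomm}--\eqref{eq:convexassoc} to splice the two trees into a single left-nested sum (repeatedly re-associating and swapping to pull $t_1$ leaves onto the right spine).

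The main obstacle I expect is the mixing case: turning an arbitrary binary tree of $+_\lambda$'s into the required left-leaning spine is not a single rewrite but requires a small termination argument, since naive associativity alone can produce right-nested sums. The cleanest way is to fix a well-founded measure (e.g.\ lexicographic pair of total number of $+_\lambda$ nodes and the right-depth of the tree) and verify that commutativity followed by associativity in the direction opposite to \eqref{eq:convexassoc} strictly decreases this measure until the right subterm is a single $t_1$. All other cases are purely local rewrites whose termination is clear because the total syntactic complexity drops; soundness of each step is guaranteed by Proposition~\ref{prop:soundness}.
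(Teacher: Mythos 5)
Your proposal is correct and follows essentially the same route as the paper's proof: eliminate $\choice$ via \eqref{eq:thlast}, float $+_\lambda$ to the top with \eqref{eq:convexotherfirst}--\eqref{eq:convexotherlast} and \eqref{eq:convexassoc}, collapse measurement/outcome translations with \eqref{eq:transmeas}--\eqref{eq:transmeasoutc} and \eqref{eq:transothertensorout}, \eqref{eq:transotherlasst}, \eqref{eq:condmeas}--\eqref{eq:condoutc}, and finally extract conditionals from tensors via \eqref{eq:condtimes}. Your extra care about the induction structure and the termination of the re-association of convex sums only makes explicit what the paper leaves implicit.
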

\begin{proof}
We are always using the rules from left to right (see remark immediately before the equations).

First, note that rule \eqref{eq:thlast} allows us to rewrite the choice operation in terms of the others, so we can assume that $t$ has no occurrences of choice.

Using rules \eqref{eq:convexotherfirst}--\eqref{eq:convexotherlast},
all uses of probabilistic mixing can be taken to the top level.
By the kind of associativity rule \eqref{eq:convexassoc},
 $t$ can then be rewritten to a term of the form
$t_0 = t^{1} +_{\lambda_1} (t^2 +_{\lambda_2} (\cdots t^n))$ where the terms $t^{i}$ do not use the mixing operation.

Now, let $t_1$ be a term without mixing. By a similar argument,
using equations \eqref{eq:transothertensorout}, \eqref{eq:transotherlasst} and \eqref{eq:condmeas}--\eqref{eq:condoutc}, we can commute translation of measurements and outcomes to the top level relative to the remaining operations.
Using \eqref{eq:transmeasoutc}, all coarse-grainings of outcomes can be commuted upwards, and using \eqref{eq:transmeas} and \eqref{eq:transoutc}, one can combine consecutive applications of either of these two operations. Therefore $t_1$ can be rewritten as $(f^*t_2)/h$ for some $f$ and $h$ and some term $t_2$ without occurrences of mixing, translation of measurements, or coarse-graining of outcomes.

From rule \eqref{eq:condtimes},
$t_2$ can be rewritten to have the form $t_3[x_1?y_1]\cdots[x_n?y_n]$ where $t_3$ only uses base cases and the product operation.
\end{proof}

\section{The categorical viewpoint}\label{sec:categorical}

\newcommand{\ntsigma}{h}
\newcommand{\nttau}{j}

In this section we make precise the idea of using one empirical model to simulate the behavior of another one. In fact, there are several notions of a simulation, depending on the powers allowed to those doing the simulating. The simplest notion is deterministic and has a clear intuitive meaning: to use $d\colon\Y $ to simulate $e\colon\X$ amounts to giving a measurement $\pi(x)\in Y$ for every $x\in X$ and a deterministic way of interpreting the outcomes of $\pi(x)$ as outcomes of $x$, \ie a map $P_{\pi(x)} \lto O_x$. For such a protocol to succeed, $\pi$ has to be simplicial and the outcome statistics of $d$ must, after interpretation, give rise to the statistics of $e$.

One can then consider ways of extending such simulations. In~\cite{karvonen2018categories} more expressive power was obtained by  allowing $\pi(x)$ to be a set of measurements instead of a singleton, and by allowing outcomes to be interpreted stochastically. 

Here we obtain even more general simulations by letting $\pi(x)$ be an adaptive measurement protocol on $\Y$. Classical shared randomness can then be modelled by allowing the use of auxilliary non-contextual empirical models.

\subsection{Deterministic simulations}

\begin{definition}
    Let $\X=\XSO$ and $\Y = \YTP$ be measurement scenarios. A \emph{deterministic morphism} $\fdec{\tuple{\pi,\ntsigma}}{\Y}{\X}$ consists of: 
   \begin{itemize}
        \item a simplicial map $\fdec{\pi}{\Sigma}{\Theta}$;
        \item a natural transformation $\fdec{\ntsigma}{\Ev_P\circ\pi}{\Ev_O}$;
        equivalently, 
         a family of maps $\fdec{\ntsigma_x}{P_{\pi(x)}}{O_x}$ for each $x\in X$.
    \end{itemize}
    The composite of the morphisms $\fdec{\tuple{\rho,\family{\nttau_y}_{y\in Y}}}{\Z}{\Y}$ and $\fdec{\tuple{\pi,\family{\ntsigma_x}_{x\in x}}}{\Y}{\X}$ is given by $\tuple{\rho\circ\pi,\family{\ntsigma_x\circ\nttau_{\pi (x)}}_{x\in X}}$. 
   
    Given an empirical model $d\colon \Y$, its pushforward along a deterministic morphism $\tuple{\pi,\ntsigma}$ is the empirical model $\tuple{\pi,\ntsigma}_* d\colon\X$
    defined by: for any $\sigma \in \Sigma$,
            \[(\tuple{\pi,\ntsigma}_* d)_\sigma=\Dist(\ntsigma_\sigma)(d|_{\pi(\sigma)}) \Mdot\]
    Let $e\colon \X$ and $d\colon \Y$ be empirical models. Then a \emph{deterministic simulation} $\fdec{\tuple{\pi,\ntsigma}}{d}{e}$ is a deterministic morphism  $\fdec{\tuple{\pi,\ntsigma}}{\Y}{\X}$ such that 
            \[ e=\tuple{\pi,\ntsigma}_* d \Mdot\]
 The category of 
 empirical models and deterministic simulations is denoted by \cat{Emp}.
\end{definition}

The reason that natural transformations $\fdec{\ntsigma}{\Ev_P\circ\pi}{\Ev_O}$ correspond to families of maps $\fdec{\ntsigma_x}{P_{\pi(x)}}{O_x}$ for each $x\in X$ is the following: both $\Ev_P\circ\pi$ and $\Ev_O$ are sheaves on a discrete space, and hence morphisms can be glued along any covering -- and in particular along the covering by singletons.

The category \cat{Emp} and the category of measurement scenarios are in fact symmetric monoidal with the product defined in Section~\ref{ssec:operations}. The action on morphisms is given by
\[\tuple{\pi,(\ntsigma_x)_{x\in X}}\otimes \tuple{\pi',(\ntsigma'_y)_{y\in Y}}=\tuple{\pi\sqcup \pi',(\ntsigma_x)_{x\in X}\sqcup (\ntsigma'_y)_{y\in Y}}\]

\subsection{Measurement protocols}\label{ssec:mps}

Deterministic simulations are fairly limited in their expressive power. For instance, one might want to use classical randomness in simulations. If one thinks of an empirical model as a black box, even more is possible: one could first perform a measurement, then based on the observed outcome choose which compatible measurement to perform next, and so on. Such procedures are known as measurement protocols \cite{acin2015combinatorial} or wirings \cite{allcock2009closedsets} in the literature on non-locality.

The main task of this section is to formalize carefully the notion of measurement protocol. We define an operation that takes a measurement scenario $\X$ and builds a new scenario $\MP(\X)$, whose measurements are all the measurement protocols over $\X$.
This turns out to be functorial, and moreover, a comonad.
Hence, using the co-Kleisli category, one can think of more general simulations $d \lto e$ as deterministic simulations $\MP(d)\lto e$.
Moreover, we will see that classical randomness comes for free by considering simulations of the form $\MP(d\otimes c)\lto e$, where $c$ is a non-contextual empirical model.

A measurement protocol is a certain kind of decision tree: the root is the first measurement, and the outcomes obtained dictate which measurements to choose next, \ie which branch of the tree to pick. Rooted trees can be formalized in various ways:
\eg recursively, as certain graphs, or in terms of prefix-closed sets of words.
We have chosen to formalize them using the latter approach.
We try to keep the intuitive picture in mind as it can give sense to the proofs, which may seem somewhat technical otherwise.

\begin{definition}
    A \emph{run} on a measurement scenario $\X = \XSO$ is a sequence $\x\defeq(x_i,o_i)_{i=1}^n$ such that $x_i\in X$ are distinct, $\enset{x_1,\ldots,x_n}\in\Sigma$, and each $o_i\in O_{x_i}$.
\end{definition}
    A run $\x$ determines a context
    $\sigma_{\x}\defeq\enset{x_1,\ldots, x_n} \in \Sigma$ and a joint assignment on that context $\fdef{s_{\x}}{x_i}{o_i} \in \Ev(\sigma_{\x})$. 
    Two runs $\x$ and $\y$ are said to be \emph{consistent} if they agree on common measurements, \ie  for every $z\in \sigma_{\x}\cap \sigma_{\y}$ we have $s_{\x}(z)=s_{\y}(z)$.
    
Given runs $\x$ and $\y$, we denote their concatenation by $\x\cdot\y$. Note that $\x\cdot\y$ might not be a run.

\begin{definition}
    A \emph{measurement protocol} on $\X = \XSO$ is a non-empty set $Q$ of runs satisfying the following  conditions:
    \begin{enumerate}[(i)]
        \item if $\x\cdot \y\in Q$ then $\x\in Q$;
        \item\label{cond:allpossibleoutcomes} if $\x\cdot (x,o)\in Q$, then $\x\cdot (x,o')\in Q$ for every $o'\in O_x$; 
        \item\label{cond:determinacy} if $\x\cdot (x,o)\in Q$ and $\x\cdot (x',o')\in Q$, then $x=x'$. 
    \end{enumerate}
\end{definition}
    One can think of such a measurement protocol as a (deterministic) strategy for interacting with an empirical model, seen as a black box whose interface is given by its measurement scenario: Condition~\eqref{cond:determinacy} expresses that the previously observed outcomes determine the next measurement to be performed, while Condition~\eqref{cond:allpossibleoutcomes} captures the fact that every outcome of a performed measurement may in principle be observed and so the protocol must specify how to react to each possibility, either by performing a new measurement or by stopping.

\begin{definition}
Given a scenario $\X = \XSO$, we build a scenario $\MP(\X)$:
   \begin{itemize}
       \item its set of measurements is the set $\MP(X)$ of measurement protocols on $\X$;
       \item the outcome set $O_Q$ of a measurement protocol $Q \in \MP(X)$ is its set of maximal runs, \ie those $\x\in Q$ that are not a proper prefix of any other $\y\in Q$;
       \item a set $\enset{Q_1,\ldots Q_n}$ of measurement protocols is compatible
       whenever for any choice of pairwise consistent runs
       $\x_i \in Q_i$ with $i \in \enset{1,\ldots,n}$,
       we have $\bigcup_i \sigma_{\x_i} \in \Sigma$.
   \end{itemize}
  \end{definition}
  %
  %

\begin{definition}
    Given an empirical model $e\colon \X$, we define the empirical model $\MP(e) \colon \MP(\X)$ as follows.
    For a compatible set $\sigma\defeq\enset{Q_1,\ldots, Q_n}$ of measurement protocols and an assignment $\fdef{s}{Q_i}{\x_i} \in \Ev(\sigma)$, we set
\[\MP(e)_\sigma(s)\defeq\begin{cases} e_{\bigcup_{i} \sigma_{\x_i}}(\cup_i s_{\x_i}) &\text{ if $\enset{\x_i}$ pairwise consistent}\\
    								0 &\text{ otherwise. }\end{cases}\]
\end{definition}

One way of thinking about the definition above is that it identifies a measurement protocol with the set of all situations in which one might find oneself while carrying out the protocol. Informally, compatibility of measurement protocols means they can be interleaved in any order whatsoever,
and when running them one never ends up performing incompatible measurements.

It is clear that $\MP(e)$ satisfies the compatibility (or no-signalling) condition: calculating the probability of a joint outcome in $\MP(e)$ corresponds to a calculating a probability of a joint outcome in $e$, so no-signalling in $e$ gives no-signalling for $\MP(e)$. 

\subsection{Measurement protocols as a comonad}\label{ssec:comonad}

We now show that $\MP$ is in fact a comonad on empirical models. We do this by verifying the conditions for a  co-Kleisli triple.
We work on the category of measurement scenarios -- getting a comonad on empirical models requires little further effort.

First, we need to build a deterministic morphism 
\[\fdec{\epsilon_{\X}}{\MP(\X)}{\X}\Mdot\]
Intuitively, it is clear how this should be done: every measurement $x\in X$ can be viewed as a measurement protocol that only measures $x$ and stops afterwards. Formally, the simplicial map underlying $\epsilon_{\X}$ is defined by mapping each measurement $x\in X$ to the protocol
\[\enset{\Lambda} \cup \enset{x}\times O_{x} = \enset{\Lambda} \cup \setdef{(x,o)}{o \in O_x} \Mcomma\]
where $\Lambda$ is the empty word. The map of outcomes is given by sending an outcome (\ie maximal run) of this protocol $(x,o)$ to $o$.

Next, for scenarios $\X = \XSO$ and $\Y = \YTP$,
we define an extension operator that lifts a morphism \[\fdec{\tuple{\pi,\ntsigma}}{\MP(\X)}{\Y}\]
to a morphism 
\[\fdec{\tuple{\pi^\cok,\ntsigma^\cok}}{\MP(\X)}{\MP(\Y)}\Mdot\]
Given a measurement protocol $Q$ over $\Y$, we wish to define $\pi^\cok(Q)$. Intuitively, the extension works as follows: when running $Q$, any time one needs to perform a measurement $y$, one performs the measurement protocol $\pi(y)$ instead, maps its outcome to $P_y$ using $\ntsigma_y$, and consults the measurement protocol $Q$ to see what to do next. However, there is a slight catch: if a measurement protocol $\pi(y)$ requires one to perform a measurement in $\X$ that has already been done, there is no need to redo it -- one can simply reuse the previous outcome.
To make this precise, we first define inductively a merge operation $*$ for compatible runs:
	\begin{align*}
	    \x*\Lambda &\defeq x
	    \\
	    \x*((y,o)\cdot \y) &\defeq \begin{cases} \x*\y &\text{ if }y\in \sigma_{\x} \\
										(\x\cdot (y,o))*\y &\text{ otherwise.}
						   \end{cases}
						   	\end{align*}
We extend $*$ to all pairs of runs by setting $\x*\y=\Lambda$ whenever $\x$ and $\y$ are not compatible. 


We are now in a position to define $\pi^\cok Q$, but we first motivate the formal definition. What are the runs of $\pi^\cok Q$?
At least, they should contain those that can be interpreted as runs of $Q$: for instance, if $\y\defeq(y_i,p_i)_{i=1}^n\in Q$, then $\pi^\cok Q$ contains runs that can be interpreted as $\y$ by $\ntsigma^\cok$. These are precisely of the form $\x_1 * \dots * \x_n$ where each $\x_i\in\pi(y_i)$ is a maximal run and satisfies $\ntsigma_{y_i}(\x_i)=p_i$.
However, $\pi^\cok Q$ contains more runs, since any prefix of such an $\x_1 * \dots * \x_n$ must be included.

We define $\pi^\cok Q$ as the closure of the set 
\[\bigcup\setdef{\setdef{\x_1 * \dots * \x_n}{\x_i\in \ntsigma^{-1}_{y_i}(p_i)}}{(y_i,p_i)_{i=1}^n\in Q}\]
under taking prefixes. To see that $\pi^\cok Q$ defines a measurement protocol, note that (i) is automatically satisfied and (ii) and (iii) follow from the fact that each $\pi(y)$ is a measurement protocol. To define the map of outcomes $\ntsigma^\cok _Q$, note that the outcomes of $\pi^\cok Q$ are exactly words of the form $\x\defeq\x_1*\dots *\x_n$, where $\x_i\in\ntsigma^{-1}_{y_i}(p_i)$ with  $\y_{\x}\defeq(y_i,p_i)_{i=1}^n\in O_Q$. Since the word $\y_{\x}$ can be read from $\x$, we define $\ntsigma^\cok _Q$ by setting $\x\longmapsto \y_{\x}$. Altogether, we have an extension operator defined by \[\tuple{\pi,\family{\ntsigma_x}_{x\in X}}^\cok \defeq \tuple{\pi^\cok,\family{\ntsigma^\cok_Q}_{Q\in\MP(X)}}\Mdot\] 

To see why $\pi^\cok$ is simplicial, let $\enset{Q_1,\ldots, Q_n}$ be a compatible set of measurement protocols. Choose pairwise consistent runs $\x_i\in \pi^\cok Q_i$. By (ii) we may extend each $\x_i$ if necessary and assume that they are maximal. Now, we can define $\y_i\defeq\ntsigma^\cok_{Q_i}\x_i$ and consistency of $\x_i$ implies that $\y_i$ are consistent.
Because the protocols $Q_i$ are compatible, we have that $\cup_i \sigma_{\y_i}$ is a face. But then $\cup_i \sigma_{\x_i}=\cup_i \pi \left(\sigma_{\y_i}\right)=\pi \left(\cup_i \sigma_{\y_i}\right)$ is also a face since $\pi$ is simplicial. This shows that $\enset{\pi^\cok Q_1, \ldots, \pi^\cok Q_n}$ is compatible, as desired.


We now check that $\tuple{\MP,\epsilon,-^\cok}$ satisfies the axioms of a co-Kleisli triple, giving rise to a comonad on the category of measurement scenarios.
The equation \[\epsilon_{\X}^\cok=\id[\MP(\X)]\] is straightforward since the map on the left corresponds to a simulation doing the following: when the measurement protocol $Q$ calls for performing the measurement $x$, do the measurement protocol that consists only of $x$, and then proceed according to $Q$. This is just another way of describing the identity $\id[\MP(\X)]$. 

The axiom
\[\epsilon_{\Y}\circ \tuple{\pi^\cok,\ntsigma^\cok}=\tuple{\pi,\ntsigma}\]
is also easy: running the simulation on the left hand side corresponds to first thinking of a measurement $y\in Y$ as a measurement protocol that only does $y$, then applying $\pi$ to each measurement obtained on the way -- \ie to doing the simulation on the right hand side.

The last one requires that for $\fdec{\tuple{\rho,(\nttau_y)_{y\in Y}}}{\MP(\Z)}{\Y}$ and $\fdec{\tuple{\pi,\family{\ntsigma_x}_{x\in X}}}{\MP(\Y)}{\X}$, we have 
\[(\tuple{\pi,\ntsigma}\circ \tuple{\rho,\nttau}^\cok)^\cok=\tuple{\pi,\ntsigma}^\cok\circ \tuple{\rho,\nttau}^\cok\Mcomma\]
which can be rewritten as
\[
\tuple{\rho^\cok\circ \pi,\family{\ntsigma_{x}\circ\nttau^\cok_{\pi(x)}}_{x\in X}}^\cok 
=
\tuple{\rho^\cok\circ\pi^\cok,\family{\ntsigma^\cok_{Q}\circ \nttau^\cok_{\pi^\cok Q}}_{Q\in \MP(X)}}
\Mdot
\]
The left-hand side describes the following way of simulating a measurement protocol $Q\in\MP(X)$: whenever one needs to measure $x$, perform $\rho^\cok\pi(x)$ instead. The right-hand side describes the following way of simulating $Q$: whenever one needs to measure $x$, simulate it first by $\pi(x)$ and then simulate $\pi(x)$ by $\rho^\cok(\pi x)$, which amounts to the same thing. 

Making the proof of the first two equations more formal is not hard. Here, we do this for the third one.  To show that 
\[(\rho^\cok\circ \pi)^\cok=\rho^\cok\circ\pi^\cok \Mcomma\]
consider a  protocol
$Q$ over $\X$. Runs in $(\rho^\cok\circ \pi)^\cok Q$ are prefixes of runs that can be interpreted as runs $(x_i,o_i)_{i=1}^n\in Q$, \ie those of the form $(\z_1*\dots *\z_n)$ where $\ntsigma_{x_i}\nttau^\cok_{\pi(x)}\z_i=o_i$.
On the other hand, runs in $(\rho^\cok\circ\pi^\cok) Q$ are prefixes of runs that can be interpreted as runs of $\pi^\cok Q$.
In turn, runs of $\pi^\cok Q$ are prefixes of runs that can be interpreted as runs of $Q$:
they are prefixes of runs of the form $(\y_1*\dots *\y_n)$ where $\ntsigma_{x_i} y_i=o_i$ for some $(x_i,o_i)_{i=1}^n\in Q$. Hence, runs of $(\rho^\cok\circ\pi)^\cok Q$ are prefixes of runs of the form $(\z_1*\dots *\z_n)$ such that $\nttau^\cok_{\pi (x)}\z_i=\y_i$ for such $\y_i$.  But then each $\z_i$ also satisfies $\ntsigma_{x_i}\nttau^\cok_{\pi (x)}\z_i=o_i$, 
proving the desired equation.
Showing that the maps of outcomes agree is similar. 

We can therefore conclude that
$\MP$ defines a comonad on the category of measurement scenarios.
This comonad is in fact comonoidal: there is a canonical transformation
\[\MP(\text{--}\otimes \text{--})\lto \MP(\text{--})\otimes \MP(\text{--}) \Mcomma\] which corresponds to the fact that a measurement protocol over $\X$ can be seen as a measurement protocol over $\X\otimes \Y$ that never uses $\Y$. 

It is easy to check that the counit $\epsilon$, the co-Kleisli extension $\text{--}^\cok$, and the costrength lift to empirical models. That is,
\begin{itemize}
    \item 
$\fdec{\epsilon_{\X}}{\MP(\X)}{\X}$ defines a  deterministic simulation $\fdec{\epsilon_e}{e}{\MP(e)}$ for any $e\colon\X$;
\item
if $\fdec{\tuple{\pi,\ntsigma}}{\MP(d)}{e}$, then $\fdec{\tuple{\pi^\cok,\ntsigma^\cok}}{\MP(d)}{\MP(e)}$;
\item the comonoidal transformation defines a deterministic simulation $\MP(e\otimes d)\lto \MP(e)\otimes \MP(d)$. 
\end{itemize}
Hence we obtain the following theorem. 

\begin{theorem}\label{thm:mpisacomononad}
 $\MP$ defines a comonoidal comonad on the category of empirical models.
\end{theorem}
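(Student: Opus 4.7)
The strategy is to lift the comonad and comonoidal structure on the category of measurement scenarios, whose axioms were verified above, to the category \cat{Emp}. Since a deterministic simulation $d \lto e$ is precisely a scenario morphism $\tuple{\pi,h}$ satisfying the extra equation $e=\tuple{\pi,h}_*d$, it suffices to verify this pushforward equation for each of the three pieces of structure flagged before the theorem: the counit $\epsilon_\X$, the coKleisli extension $\tuple{\pi,h}^\cok$, and the comonoidal transformation $\MP(\X\otimes\Y)\lto\MP(\X)\otimes\MP(\Y)$. Functoriality, the comonad axioms, and coherence of the comonoidal transformation then follow automatically from the scenario-level proof, because equal scenario morphisms induce equal pushforwards.

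For the counit, I would unfold $(\epsilon_\X)_*\MP(e)$ directly. Recall that $\epsilon_\X$ sends $x\in X$ to the trivial protocol $Q_x$ whose maximal runs are $(x,o)$ for $o\in O_x$, with outcome map $(x,o)\mapsto o$. By the definition of $\MP(e)$, the distribution $\MP(e)_{\enset{Q_x}}$ assigns probability $e_{\enset{x}}(o)$ to $(x,o)$, and pushing forward along the outcome map recovers $e_{\enset{x}}$. The same calculation generalises to arbitrary contexts $\sigma\in\Sigma$ by viewing each $x\in\sigma$ as its trivial protocol and observing that the union of singleton contexts $\enset{x}$ is just $\sigma$.

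For the coKleisli extension, suppose we are given a simulation $\tuple{\pi,h}\colon\MP(d)\lto e$, so $e=\tuple{\pi,h}_*\MP(d)$. Fixing a compatible family $\enset{Q_1,\ldots,Q_n}$ of protocols over $\X$, I would show that the pushforward of $\MP(d)$ at $\enset{\pi^\cok Q_i}$ along $h^\cok$ agrees with $\MP(e)_{\enset{Q_i}}$. Using the explicit description of $\pi^\cok Q_i$ as the prefix-closure of runs $\x_1*\cdots*\x_{k_i}$ interpretable by $h^\cok$ as runs of $Q_i$, the outcome statistics on the left-hand side reduce to evaluations of $\MP(d)$ at the unions of the $\pi(y_j)$ appearing, which after applying $h$ become evaluations of $e$ at the corresponding contexts in $\X$ by hypothesis; this matches the definition of $\MP(e)$ on $\enset{Q_i}$.

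The comonoidal transformation regards a protocol on $\X$ as a protocol on $\X\otimes\Y$ that never measures $\Y$, and symmetrically for $\Y$; the required pushforward equation on empirical models reduces to the product formula $(e\otimes d)_{\sigma\sqcup\sigma'}\tuple{s,s'}=e_\sigma(s)\,d_{\sigma'}(s')$ from the definition of the tensor product in Section~\ref{ssec:operations}. The main obstacle throughout is the coKleisli extension step: the combinatorics of the merge operation $*$ and the prefix closure used to build $\pi^\cok Q$ require careful bookkeeping to ensure that marginalisation commutes with the extension, especially when runs in different $\pi^\cok Q_i$ share measurements already performed. The remaining verifications are routine unfoldings.
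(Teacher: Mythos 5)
Your proposal follows essentially the same route as the paper: verify the co-Kleisli triple axioms and the comonoidal structure at the level of measurement scenarios, then lift the counit, the extension operator, and the costrength to empirical models by checking the relevant pushforward equations, with the equational axioms inherited automatically because morphisms of \cat{Emp} are just scenario morphisms satisfying a property. The paper dismisses the lifting step as ``easy to check''; your sketch of those checks (including the correct direction $\MP(e)\lto e$ for the counit simulation) fills in exactly what the paper omits, so there is no substantive difference in approach.
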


\subsection{General simulations}\label{ssec:generalsimulations}

The main import of Theorem~\ref{thm:mpisacomononad} is that it allows us to extend our notion of a simulation in a standard manner. Since $\MP$ is comonoidal, its co-Kleisli category 
$\cat{Emp}_{\MP}$ inherits monoidal structure from \cat{Emp}. Since non-contextual empirical models are closed under the product $\otimes$, it is immediate that the following definition results in a category.

\begin{definition}
 Given empirical models $e$ and $d$,
 a \emph{simulation} of $e$ by $d$ is a map $d\otimes c\lto e$ in $\cat{Emp}_{\MP}$,
\ie a map ${\MP(d\otimes c)\lto e}$ in $\cat{Emp}$, for some non-contextual model $c$. We denote the existence of a simulation of $e$ by $d$ as $d\simulates e$, read ``d simulates e''.
\end{definition}

Note that for such maps to compose, it is enough that the class of objects $c$ is coming from is closed under $\otimes$. Hence, we could define even more general simulations by \eg allowing it to range over all quantum-realizable empirical models.

\begin{example}\label{ex:barretpironio}
 Simulations in our sense have been studied less formally in the literature. For instance, \cite[Corollary 2]{barrettpironio2005} shows that any two-output bipartite box can be simulated with PR boxes. In our setting, this means that for any such box $e$ there is an $n$ such that $\PR^{\otimes n}\simulates e$. 
 
 In Theorem 2 they also prove a negative result: there is a quantum realizable 5-partite empirical model $e$ that cannot be simulated by any amount of PR boxes. Strictly speaking, their argument establishes that there is no simulation $\tuple{\pi,\ntsigma}\colon \MP(\PR^{\otimes n}\otimes c)\lto e$ where $c$ is non-contextual and $\tuple{\pi,\ntsigma}$ is of the form $\bigotimes_{i=1}^5 \tuple{\pi_i,\ntsigma_i}$ at the level of measurement scenarios. Restricting to such simulations is operationally motivated for questions of non-locality, since it corresponds to simulations where the PR boxes are distributed among the parties and any party can only access their own PR box (and shared classical randomness).
 
 However, this result readily implies the nonexistence of \emph{any} simulation $PR^{\otimes n}\otimes c\lto e$. This is because PR boxes are bipartite: if, say Alice and Bob both wanted to use the same half of the same PR box, they'd have to make sure that they use the same measurement setting $x_i$. This can of course be coordinated by the classical shared randomness, but this implies that in effect the PR box in question reduces to a submodel where the allowed measurements are $\enset{x_i,y_0,y_1}$ (or a convex combination of such), and these are always non-contextual 
 and thus can be incorporated into the shared classical randomness. Hence $\PR^{\otimes n}\simulates e$ would imply a simulation of the kind that they rule out.
\end{example}

We now establish the equivalence between the categorical view of simulations and the algebraic view based on free operations of a resource theory.

\begin{theorem}\label{thm:simsasterms}
Let $e : \X$ and $d : \Y$ be empirical models.
Then $d\simulates e$ if and only if
there is a typed term $v:\Y \vdash t: \X$ such that $t[d/v] \simeq e$.
\end{theorem}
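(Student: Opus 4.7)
The plan is to prove the two directions separately. The $(\Leftarrow)$ direction will go by structural induction on the term $t$, showing that each of the free operations corresponds to a primitive of the simulation structure. The $(\Rightarrow)$ direction requires decoding the simulation data and reconstructing it algebraically, with the measurement protocol comonad being built up from iterated conditional measurements.

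For $(\Leftarrow)$, I would induct on $t$. The variable case $v$ is handled by the identity simulation $d \simulates d$ (with trivial auxiliary $c = \eone$); the constants $\ezero$ and $\eone$ are non-contextual and can be absorbed directly into the auxiliary $c$. The operations $f^*$ and $/h$ correspond precisely to the two components (simplicial map and natural transformation of outcome sets) of a deterministic morphism in $\cat{Emp}$. The tensor $\otimes$ is handled by the monoidal structure of $\cat{Emp}_{\MP}$ inherited via the comonoidality of $\MP$, and the choice $\choice$ reduces to $\otimes$ using equation \eqref{eq:thlast}. Convex mixing $+_\lambda$ is realised by introducing a non-contextual coin model as an extra tensor factor of $c$ and using a simplicial routing to pick a branch. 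The critical case is conditional measurement $e[x?y]$: the new measurement $x?y$, whose outcome set is the dependent pair type $\bigsqcup_{o \in O_x} O_{y_o}$ and whose empirical distribution is given by marginalising over $\{x, y_o\}$, is exactly a depth-two protocol in $\MP(e)$. Hence $e[x?y]$ sits inside $\MP(e)$ via a simplicial inclusion with identity outcome maps, yielding the desired co-Kleisli simulation; propagation through the enclosing term structure is handled via the co-Kleisli extension.

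For $(\Rightarrow)$, suppose $d \simulates e$, so there is a non-contextual $c$ and a deterministic morphism $\tuple{\pi, h} \colon \MP(d \otimes c) \to e$ with $e = \tuple{\pi, h}_* \MP(d \otimes c)$. By Proposition~\ref{prop:novarterm}, $c$ is represented by a variable-free term $t_c$, so $d \otimes c$ is represented by $v \otimes t_c$. The heart of the construction is algebraically encoding each protocol $\pi(x)$ via iterated conditional measurements, which I would do by induction on protocol depth. A depth-one protocol is already a measurement; a deeper protocol $Q$ rooted at some $z$ with sub-protocols $Q_o$ for $o$ ranging over outcomes of $z$ is built by first recursively adjoining each $Q_o$ as a vertex (via nested conditional measurements on earlier-constructed terms), and then applying $[z ? (Q_o)_{o}]$. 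After finitely many iterations, the resulting term has, in its scenario, a vertex for every protocol $\pi(x)$ with $x \in X$. Applying the measurement translation $f^*$ along the simplicial map sending $x \mapsto \pi(x)$, followed by the coarse-graining $/h$ using the outcome maps $h_x$, produces the desired term, which becomes isomorphic to $e$ upon substituting $d$ for $v$.

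The main technical obstacle will be verifying that iteration of $[x?y]$ genuinely matches the $\MP$ construction. Two points need checking: first, that the outcome set of the iterated conditional measurement encoding a protocol $Q$ is in natural bijection with the maximal runs of $Q$ (both constructed as the same dependent pair type), and that the conditional-measurement probabilities agree recursively with the probabilities assigned by $\MP$; second, that the compatibility condition on the extended simplicial complex $\Sigma[x_1?y_1][x_2?y_2]\cdots$ correctly reflects the joint compatibility of protocols in $\MP$, which is defined via pairwise-consistent choices of runs. Both reduce to unwinding the definitions, but particular care is needed since the compatibility relation in $\MP(\X)$ is subtle. Here the fact that we only need to handle protocols in the image of the simplicial $\pi$ simplifies matters considerably, as it suffices to deal with families of protocols that are known to be compatible from the outset.
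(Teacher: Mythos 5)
Your direction from simulations to terms is essentially the paper's own argument: decode the deterministic morphism $\MP(d\otimes c)\lto e$ as a measurement translation followed by a coarse-graining, represent $c$ by a closed term via Proposition~\ref{prop:novarterm}, and realise each protocol $\pi(x)$ by iterated conditional measurements -- you merely supply the induction on protocol depth where the paper simply asserts that protocols are built by repeated conditioning. The other direction is where you genuinely diverge: the paper explicitly considers and then declines the structural induction you propose, and instead rewrites $t$ into the normal form of Proposition~\ref{prop:normalform} using the soundness of the equational theory, so that a simulation need only be exhibited for a top-level convex combination of terms of the shape $(f^{*}((d\otimes c)[x_1?y_1]\cdots[x_n?y_n]))/h$. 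Your induction handles each operation locally and stitches the results together by co-Kleisli composition; this buys independence from Propositions~\ref{prop:soundness} and~\ref{prop:normalform}, at the price of repeating the ``absorb the closed sibling subterm into $c$'' and ``route by a coin'' constructions in every binary case, whereas the normal form performs that reorganisation once and is independently useful towards completeness. Your case analysis is sound -- linearity of variables guarantees that in each binary operation at most one argument contains $v$, so the other is closed and hence non-contextual by Proposition~\ref{prop:novarterm}, and the conditional-measurement case composes correctly via the extension $(-)^{\cok}$. One detail to make explicit in the protocol-to-term encoding (which the paper also glosses over): a measurement protocol may stop early on some outcome branches while continuing on others, whereas $[x?y]$ demands a continuation $y_o$ for every $o\in O_x$; such branches must be padded with a trivial single-outcome measurement (a tensor factor given by the singleton model, folded into $c$) and erased by the final coarse-graining $/h$. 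This is a presentational gap, not a mathematical one.
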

\begin{proof}
    Suppose that $d\simulates e$.
    Then there is a deterministic simulation
    \[\fdec{\tuple{\pi,\ntsigma}}{\MP(d\otimes c)}{e}\Mdot\]
    But a deterministic simulation amounts to a combination of a coarse-graining and a measurement translation, so that $e=(\pi^*\MP(d\otimes c)/\ntsigma)$. Thus it suffices to show that there is a term $t'$ such that $t'[d/v]\simeq \MP(d\otimes c)$. Since $c$ is non-contextual it can be represented by a closed term by Proposition~\ref{prop:novarterm}. Moreover, it is clear that all of the measurement protocols can be built using conditional measurements repeatedly, establishing the implication from left to right.
    
    
    For the other direction, one possibility would be to prove this by induction on the rules of Table~\ref{table:ops}.
    However, we will be able, more directly, to use the normal form of Proposition~\ref{prop:normalform}.
    Suppose that $v:\Y \vdash t: \X$ and $t[d/v] = e$. We can rewrite $t$ to $t_0$ as in Proposition~\ref{prop:normalform}
    using the equational theory. By Proposition~\ref{prop:soundness}, this rewriting is sound, and therefore ${t_0[d/v] \simeq e}$. The term  $t_0$ is of the form
    \[t_0 = t^1 +_{\lambda_1} (t^2 +_{\lambda_2} + (\cdots +_{\lambda_{n-1}} t^n)\Mcomma\]
    where the terms $t^i$ do not have the probabilistic mixing or choice operations.
    Note that the variable $d$ can only occur in one of these terms,
    as it can occur at most once in any term. Without loss of generality, say that it appears in $t^1$ -- if it appears in another we can use the `commutativity' equation \eqref{eq:convexcomm} of convex combinations to shift it to the first position. Since the remainder of the term has no variables, by Proposition~\ref{prop:novarterm} it represents a non-contextual model. 
    We thus have that $t_0[d/v] \simeq t^1[d/v] +_\lambda c$
    where $c$ is non-contextual.
    We have that $t^1[d/v] \simulates t_0[d/v] \simeq e$ by simulating classical noise through a measurement protocol using an auxilliary non-contextual model.
    
    It is now enough to show that $d \simulates t^1[d/v]$.
    Now, we know that $t^1$ is of the form $t_1$ from Proposition~\ref{prop:normalform}, \ie it is 
    $(f^*t_2)/h$ for some $f$ and $h$ and a term $t_2$ without mixing, translation of measurements, coarse-graining of outcomes, or choice operations. We can read directly a simulation $\fdec{\tuple{f,h}}{t_2[d/v]}{t_1[d/v]}$, so we again reduce the problem to proving that $d \simulates t^2[d/v]$.
    
    Now, recall from Proposition~\ref{prop:normalform} that $t_2$
    is of the form $t_3[x_1?y_1]\cdots[x_n?y_n]$ and $t_3$ is a product of base cases. Since $t_3$ has (at most) one variable $v$, we
    know, again using Proposition~\ref{prop:novarterm},  that $t_3[d/v] \simeq d \otimes c$ with $c$ a non-contextual model. 
    Hence, our goal is to prove that 
    $d \simulates (d \otimes c)[x_1?y_1]\cdots[x_n?y_n]$.
    We can prove the required result by building a deterministic simulation
    \[\MP(d \otimes c) \lto (d \otimes c)[x_1?y_1]\cdots[x_n?y_n]\]
    where the conditional measurements are simulated by protocols (of length at most $n$).
\end{proof}

\begin{theorem}\label{thm:ncfismonotone}
   $d\simulates e$ implies $\NCF(d)\leq \NCF(e)$.
\end{theorem}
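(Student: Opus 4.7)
The plan is to lift the pointwise monotonicity of Proposition~\ref{prop:ncfandoperations} to the full simulation preorder via the syntactic characterization of simulations given in Theorem~\ref{thm:simsasterms}. First I would invoke that theorem: $d \simulates e$ yields a typed term $v : \Y \vdash t : \X$ with $t[d/v] \simeq e$. Since isomorphism of empirical models is a relabelling that transports any non-contextual decomposition $e = e^{\mathsf{NC}} +_\lambda e'$ across, $\NCF$ is an isomorphism invariant, and it suffices to prove the estimate $\NCF(d) \leq \NCF(t[d/v])$ for every well-typed term and every $d : \Y$.

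I would then proceed by structural induction on $t$. The case $t = v$ is immediate, and whenever $t$ contains no occurrence of $v$, Proposition~\ref{prop:novarterm} yields that $t[d/v]$ is non-contextual, so $\NCF(t[d/v]) = 1 \geq \NCF(d)$. The unary operations $f^*, /h, [x?y]$ are handled directly by the corresponding clauses of Proposition~\ref{prop:ncfandoperations}: each preserves or increases $\NCF$. For the binary operations $\choice$, $\otimes$, and $+_\lambda$, the key structural point is the linearity of the typing discipline: $v$ appears in at most one of the two subterms, and the other subterm is closed and hence non-contextual by Proposition~\ref{prop:novarterm}. Substituting such a non-contextual factor into the $\max$ identity for $\choice$ and into the multiplicative identity $\NCF(e \otimes e') = \NCF(e)\NCF(e')$ for $\otimes$ immediately reduces those cases to the inductive hypothesis.

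The main obstacle is the convex combination $t = t_1 +_\lambda t_2$. Assuming without loss of generality that $v$ occurs in $t_1$, I would use the fact that $t_2[d/v]$ is non-contextual together with the bound
\[\CF(t_1[d/v] +_\lambda t_2[d/v]) \leq \lambda\, \CF(t_1[d/v]) + (1-\lambda) \cdot 0 \leq \CF(t_1[d/v])\Mcomma\]
from which $\NCF(t[d/v]) \geq \NCF(t_1[d/v]) \geq \NCF(d)$ follows by the inductive hypothesis. Assembling all cases concludes $\NCF(d) \leq \NCF(t[d/v]) = \NCF(e)$.

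A streamlined alternative would be to appeal to the normal form of Proposition~\ref{prop:normalform}, which, via soundness (Proposition~\ref{prop:soundness}) and the isomorphism-invariance of $\NCF$, lets one check the estimate only on the four strata of the normal form: innermost a tensor of base cases, then a chain of conditional measurements, then a single $(f^*\cdot)/h$, and finally an outer convex combination. The ingredients are the same, but the induction becomes four short layer-by-layer arguments instead of one induction over all term constructors.
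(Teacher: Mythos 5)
Your proposal is correct and follows exactly the paper's route: invoke Theorem~\ref{thm:simsasterms} to obtain a term $t$ with $t[d/v]\simeq e$, then establish $\NCF(d)\leq\NCF(t[d/v])$ by induction on the structure of $t$ using Proposition~\ref{prop:ncfandoperations} (the paper leaves this induction as ``easy''; you have simply spelled out its cases, correctly exploiting linearity of the typing discipline and Proposition~\ref{prop:novarterm} for the variable-free subterms).
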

\begin{proof}
     This is not hard to prove directly, but we reduce it to earlier results. By Theorem~\ref{thm:simsasterms}, $d\simulates e$ implies that $e\simeq t[d/v]$,
     whence Proposition~\ref{prop:ncfandoperations} implies that $\NCF(d)\leq\NCF(t[d/v])$ by an easy induction on the structure of terms.
\end{proof}

\subsection{No-cloning theorem}\label{ssec:nocloning}

\begin{theorem}[No-cloning]
$e\simulates e\otimes e$ with a non-contextual $c$ if and only if $e$ is non-contextual.
\end{theorem}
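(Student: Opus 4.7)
For the \emph{if} direction, when $e$ is non-contextual, $\NCF(e\otimes e)=\NCF(e)^{2}=1$ by Proposition \ref{prop:ncfandoperations}, so $e\otimes e$ itself is non-contextual and, by Proposition \ref{prop:novarterm}, can be built from closed terms. Taking $c$ to be $e\otimes e$, the desired simulation $\MP(e\otimes c)\to e\otimes e$ is then obtained by composing the counit of $\MP$ with the projection onto the $c$-component, ignoring the $e$-input altogether.

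For the \emph{only if} direction, the main step is monotonicity: by Theorem \ref{thm:ncfismonotone} combined with the multiplicativity of $\NCF$ under $\otimes$ from Proposition \ref{prop:ncfandoperations},
\[\NCF(e) \;\leq\; \NCF(e\otimes e) \;=\; \NCF(e)^{2}.\]
Since $\NCF(e)\in[0,1]$, this forces $\NCF(e)(1-\NCF(e))\leq 0$ and hence $\NCF(e)\in\{0,1\}$. The case $\NCF(e)=1$ gives exactly the desired non-contextuality.

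What remains is to exclude the strongly contextual case $\NCF(e)=0$. Here my plan is to use Theorem \ref{thm:simsasterms} and the normal form of Proposition \ref{prop:normalform} to write the simulation as a term $t[e/v]\simeq e\otimes e$ with $t = t^{1} +_{\lambda_1}(t^{2}+_{\lambda_2}\cdots)$; by commutativity \eqref{eq:convexcomm} one may assume $v$ occurs only in $t^{1}$, while each $t^{i}$ for $i\geq 2$ is closed and thus non-contextual (Proposition \ref{prop:novarterm}). Tracking $\NCF$ through the mixture, using that $\NCF(t^{1}[e/v])\geq \NCF(e)$ (operations in $t^{1}$ are $\NCF$-non-decreasing) and that the remainder is non-contextual, sharpens the inequality to $\NCF(e)^{2}\geq \lambda_{1}\NCF(e)+(1-\lambda_{1})$, which when $\NCF(e)=0$ forces $\lambda_{1}=1$. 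The simulation then collapses to a purely deterministic transformation $(f^{*}t_{2})/h$ of $e$ via conditional measurements and coarse-graining, with no auxiliary classical randomness. The hard part will be completing the argument from here: my plan is an information-theoretic route showing that the entropy extractable from compatibility-constrained multi-protocols on $e$ cannot reach the doubled entropy required by a full context of $e\otimes e$ (since parallel protocols in $\MP(\X)$ must have their measurement sets union to a single face of $\Sigma$), forcing $H_{\max}(e)=0$ and hence $e$ deterministic, contradicting $\NCF(e)=0$.
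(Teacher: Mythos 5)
Your \emph{if} direction and your reduction to $\NCF(e)\in\{0,1\}$ via $\NCF(e)\leq\NCF(e\otimes e)=\NCF(e)^2$ are both correct and essentially match the paper (which phrases the latter as the impossibility of $0<\NCF(e)<1$). Eliminating the auxiliary randomness when $\NCF(e)=0$ is also sound: monotonicity gives $\NCF(e\otimes e)=0$, so $\lambda_1=1$ in the normal form. The genuine gap is the step you yourself flag as only a ``plan'': excluding $\NCF(e)=0$. The entropy argument as sketched fails because adaptivity defeats the bound you need. A conditional measurement $x?y$ has outcome entropy $H(x)+\sum_o p_o\,H(y_o\mid x{=}o)$, and since the various $y_o$ need not be jointly compatible, this can strictly exceed the entropy of \emph{every} single context of $e$; so the entropy extractable from the deterministic transformation $(f^*t_2)/h$ of $e$ is not bounded by $H_{\max}(e)$, and comparing it with the merely ``doubled'' entropy of a context of $e\otimes e$ yields no contradiction. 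To rescue this route you would need to iterate ($e\simulates e^{\otimes n}$ for all $n$, each with $\lambda_1=1$) and replace $H_{\max}(e)$ by a uniform bound such as $\log\sum_{\sigma\in\Sigma}\left|\Ev(\sigma)\right|$ on the support size of any context of the iterated conditional-measurement extension of $e$, so that $n\,H_{\max}(e)\leq\mathrm{const}$ forces $H_{\max}(e)=0$ and hence determinacy, contradicting $\NCF(e)=0$. You have not done this, and even that sketch requires checking which outcome tuples carry nonzero probability.

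For comparison, the paper closes this case quite differently, proving the stronger statement $\NCF(e)>0$ directly rather than refuting $\NCF(e)=0$ after the fact: from $e\simulates e^{\otimes n}$ it considers the underlying simplicial map $X^{*n}\lto\MP(X)$, whose restriction to each copy of $X$ is one of finitely many maps; by pigeonhole, for $n$ large enough at least $\left|\MP(X)\right|$ copies agree (for a fixed outcome of the auxiliary model), so their common image must lie in a single compatible subset $U\subseteq\MP(X)$; the restriction of $\MP(e)$ to $U$ is non-contextual, and its pushforward along the composed simulation $\MP(e\otimes c_n)\lto e^{\otimes n}\lto e$ exhibits a non-contextual component of $e$ of positive weight. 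If you want to keep your structure, you must either carry out the iterated entropy bound in full or adopt this combinatorial argument for the $\NCF(e)=0$ case.
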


\begin{proof}
	  If $e$ is non-contextual, we use it as a free resource and consider \eg the simulation $\fdec{\epsilon_{e\otimes e}}{\MP(e\otimes e)}{e\otimes e}$, proving the implication from right to left.

	 For the other direction, assume that $e\simulates e\otimes e$. We will prove that $\NCF(e)=1$ by first showing that $\NCF(e)>0$ and then that $0<\NCF(e)<1$ is impossible.

	 To see that $\NCF(e)>0$, note that $e\simulates e\otimes e$ implies $e\simulates e^{\otimes n}$ for any $n$. For each $n$, we therefore have a deterministic simulation $T(e\otimes c_n)\lto e^{\otimes n}$ for some non-contextual model $c_n$. Since $c_n$ is non-contextual, we may assume that the simplicial complex it is defined over is a simplex (\ie all faces are measurable), and hence (since having more information than is needed can't hurt) we may assume $c_n$ is defined over a singleton. Thus we can assume that each measurement protocol in the image of the simplicial map starts by first measuring $c_n$ and then proceeds solely in $e$. Then, $c_n$ in effect randomizes which deterministic map $\MP(\X)\lto\X^{\otimes n}$ to use. Consider the underlying simplicial map $\fdec{\pi}{X^{*n}}{\MP(X)}$. It is determined by its components ${\pi_i\colon X\longhookrightarrow \X^{*n} \lto \MP(X)}$. Since there are only finitely many of these, for large enough $n$, we can force at least $k \defeq \left\vert\MP(X)\right\vert$ of these components to agree for any given outcome of $c_n$ -- of course, which components agree might depend on the outcome of $c_n$. For such an $n$, choose an outcome $\lambda\in\supp (c_n)$, and let $i_1,\dots, i_k$ be the indices for which $\pi_i$ agrees given $\lambda$. Then we have a simplicial map $X_{i_1}*\dots*X_{i_k}\lto \MP(X)$ where each component agrees, so by choice of $k$ its image has to lie in a compatible subset  ${U\subseteq\MP(X)}$. This implies that $\MP(e)\hookrightarrow U$ is non-contextual. But then the composed simulation ${\MP(e\otimes c_n)\lto e^{\otimes n}\lto e}$ shows that $e$ has a non-contextual component of size $c_n(\lambda)>0$, namely the pushforward of $\MP(e)\hookrightarrow U$ along the composed simulation. Therefore, $\NCF(e)>0$, as desired. 


	 It remains to show that $0<\NCF(e)<1<$ is impossible, which we do by producing a contradiction. By Proposition~\ref{prop:ncfandoperations} and  $0<\NCF(e)<1$, we have $\NCF(\MP(e\otimes x))=\NCF(e)>\NCF(e)^2=\NCF(e\otimes e)$, while Theorem~\ref{thm:ncfismonotone} implies that $\NCF(e)\leq \NCF(e)^2$.
\end{proof}

Note that the usual no-cloning theorem in quantum mechanics \cite{wootters1982single} states the impossibility of a single quantum procedure cloning arbitrary quantum states, whereas the theorem above states the impossibility of a classical procedure cloning the (outcome statistics) of a known non-contextual system.
One might therefore want to use a different name to avoid confusion.
However, this property is also called no-cloning in the context of a general framework for resource theories developed in~\cite{coecke2016mathematical}.
A similar theorem is proved in the bipartite setting in~\cite{joshi2011nobroadcasting}. Besides restricting to the bipartite setting, a key difference is that their notion of a transformation between empirical models is a priori different: our transformations are operationally motivated and defined, whereas they allow for any transformations satisfying a few reasonable properties such as convexity.

\section{Outlook}\label{sec:outlook}

We have presented a comonadic formulation of simulation of empirical models, and shown that it coincides with a resource theory approach based on free operations. This provides a general and  mathematically well-structured approach to simulation and convertibility, which subsumes the concrete examples studied in the literature, e.g.~\cite{barrettpironio2005,barrettetal2005,jonesmasanes2005interconversions,dupuis2007nouniversalbox,allcock2009closedsets,forsterwolf2011bipartite}. We believe that a robust, general framework of this kind is particularly important for proving \emph{non}-simulability 
results, some concrete examples of which include  \cite{barrettpironio2005,dupuis2007nouniversalbox,forsterwolf2011bipartite}, since it allows for new tools to be used in proving general results. A case in point is the no-cloning theorem above, the proof of which uses basic facts about both simplicial maps and the non-contextual fraction.

An important feature of our approach is that it allows for adaptive protocols. Adaptivity is known to increase expressive power~\cite{forsterwolf2011bipartite}, and it is also important in relation to the measurement-based paradigm for quantum computing (MBQC).

We see this work as providing some important tools for gaining a  deeper understanding of convertibility between empirical models.
There are many further avenues of research to pursue. On the side of applications, one promising direction is to study computation in the MBQC paradigm.
As another application, let us mention the fact that a \emph{possibilistic} empirical model 
corresponds to a constraint-satisfaction problem (CSP) \cite{ab}. Understanding how simulations in our sense relate to notions of reduction between CSPs  might result in novel concepts for CSPs or alternatively, allowing the use of CSP techniques such as pp-definability
to prove negative results about simulations between empirical models.


Another promising direction is to note that the $\MP$ comonad is in fact naturally \emph{graded}, where the grading is by the auxiliary resource used in the simulation. This provides a natural setting for ``relative simulatability'', in which we can ask which (possibly super-quantum) resources can  be converted to which other ones. Ideally, one would show that the no-cloning result holds for any reasonable class of free empirical models.
There is also a natural grading by the length of the allowed runs in the protocols, which may be useful for fine-grained expressiveness results.

One can also ask whether Theorem~\ref{thm:simsasterms} can be strengthened to a bijection between (suitable) simulations and terms up to equality.

Finally, there are several alternative approaches to contextuality, for example, based on operational equivalence~\cite{spekkens2005contextuality}, graph theory~\cite{csw2014graphtheoretic}, hypergraphs~\cite{acin2015combinatorial} or effect algebras~\cite{sander2015effect}.  Each of these can probably accommodate a notion of deterministic map. If the $\MP$ comonad and $\otimes$ can be made to work in these frameworks as well, one might hope to compare the resulting categories -- if they turned out to be equivalent, one could then use tools specific to any of these frameworks when studying simulability.

\section*{Acknowledgment}

Funding from the following is gratefully acknowledged:
Engineering and Physical Sciences Research Council, EP/N018745/1, `Contextuality as a Resource in Quantum Computation' (SA \& RSB); 
the Osk.~Huttunen Foundation (MK); and
the European Union’s Horizon 2020 Research and Innovation Programme under the Marie Sk\l{}odowska-Curie Grant Agreement No.~750523, `Resource Sensitive Quantum Computing' (SM).

\bibliographystyle{IEEEtran}
\bibliography{lics}

\begin{thebibliography}{10}
\providecommand{\url}[1]{#1}
\csname url@samestyle\endcsname
\providecommand{\newblock}{\relax}
\providecommand{\bibinfo}[2]{#2}
\providecommand{\BIBentrySTDinterwordspacing}{\spaceskip=0pt\relax}
\providecommand{\BIBentryALTinterwordstretchfactor}{4}
\providecommand{\BIBentryALTinterwordspacing}{\spaceskip=\fontdimen2\font plus
\BIBentryALTinterwordstretchfactor\fontdimen3\font minus
  \fontdimen4\font\relax}
\providecommand{\BIBforeignlanguage}[2]{{%
\expandafter\ifx\csname l@#1\endcsname\relax
\typeout{** WARNING: IEEEtran.bst: No hyphenation pattern has been}%
\typeout{** loaded for the language `#1'. Using the pattern for}%
\typeout{** the default language instead.}%
\else
\language=\csname l@#1\endcsname
\fi
#2}}
\providecommand{\BIBdecl}{\relax}
\BIBdecl

\bibitem{horodecki2013quantumness}
M.~Horodecki and J.~Oppenheim, ``(quantumness in the context of) resource
  theories,'' \emph{International Journal of Modern Physics B}, vol.~27, no.
  01n03, p. 1345019, 2013.

\bibitem{abramsky2017contextual}
S.~Abramsky, R.~S. Barbosa, and S.~Mansfield, ``Contextual fraction as a
  measure of contextuality,'' \emph{Physical Review Letters}, vol. 119, no.~5,
  p. 050504, 2017.

\bibitem{coecke2016mathematical}
B.~Coecke, T.~Fritz, and R.~W. Spekkens, ``A mathematical theory of
  resources,'' \emph{Information and Computation}, vol. 250, pp. 59--86, 2016.

\bibitem{fritz2017resource}
T.~Fritz, ``Resource convertibility and ordered commutative monoids,''
  \emph{Mathematical Structures in Computer Science}, vol.~27, no.~6, pp.
  850--938, 2017.

\bibitem{ks}
S.~Kochen and E.~P. Specker, ``The problem of hidden variables in quantum
  mechanics,'' \emph{Journal of Mathematics and Mechanics}, vol.~17, no.~1, pp.
  59--87, 1967.

\bibitem{bell1966}
J.~S. Bell, ``On the problem of hidden variables in quantum mechanics,''
  \emph{Reviews of Modern Physics}, vol.~38, no.~3, pp. 447--452, 1966.

\bibitem{raussendorf2013contextuality}
R.~Raussendorf, ``Contextuality in measurement-based quantum computation,''
  \emph{Physical Review A}, vol.~88, no.~2, p. 022322, 2013.

\bibitem{howard2014contextuality}
M.~Howard, J.~Wallman, V.~Veitch, and J.~Emerson, ``Contextuality supplies the
  {`magic'} for quantum computation,'' \emph{Nature}, vol. 510, no. 7505, p.
  351, 2014.

\bibitem{bermejo2017contextuality}
J.~Bermejo-Vega, N.~Delfosse, D.~E. Browne, C.~Okay, and R.~Raussendorf,
  ``Contextuality as a resource for models of quantum computation with
  qubits,'' \emph{Physical Review Letters}, vol. 119, no.~12, p. 120505, 2017.

\bibitem{karvonen2018categories}
M.~Karvonen, ``Categories of empirical models,'' in \emph{Proceedings of 15th
  International Conference on Quantum Physics and Logic (QPL 2018)}, ser.
  Electronic Proceedings in Theoretical Computer Science, P.~Selinger and
  G.~Chiribella, Eds., vol. 287, 2019, pp. 239--252.

\bibitem{ab}
S.~Abramsky and A.~Brandenburger, ``The sheaf-theoretic structure of
  non-locality and contextuality,'' \emph{New Journal of Physics}, vol.~13,
  no.~11, p. 113036, 2011.

\bibitem{acin2015combinatorial}
A.~Ac{\'\i}n, T.~Fritz, A.~Leverrier, and A.~B. Sainz, ``A combinatorial
  approach to nonlocality and contextuality,'' \emph{Communications in
  Mathematical Physics}, vol. 334, no.~2, pp. 533--628, 2015.

\bibitem{barrettpironio2005}
J.~Barrett and S.~Pironio, ``{P}opescu-{R}ohrlich correlations as a unit of
  nonlocality,'' \emph{Physical Review Letters}, vol.~95, no.~14, p. 140401,
  2005.

\bibitem{raussendorf2001one}
R.~Raussendorf and H.~J. Briegel, ``A one-way quantum computer,''
  \emph{Physical Review Letters}, vol.~86, no.~22, p. 5188, 2001.

\bibitem{ghirardi1980general}
G.~Ghirardi, A.~Rimini, and T.~Weber, ``A general argument against superluminal
  transmission through the quantum mechanical measurement process,''
  \emph{Lettere al Nuovo Cimento Series 2 (1971--1985)}, vol.~27, no.~10, pp.
  293--298, 1980.

\bibitem{popescu1994quantum}
S.~Popescu and D.~Rohrlich, ``Quantum nonlocality as an axiom,''
  \emph{Foundations of Physics}, vol.~24, no.~3, pp. 379--385, 1994.

\bibitem{allcock2009closedsets}
J.~Allcock, N.~Brunner, N.~Linden, S.~Popescu, P.~Skrzypczyk, and
  T.~V{\'e}rtesi, ``Closed sets of nonlocal correlations,'' \emph{Physical
  Review A}, vol.~80, p. 062107, 2009.

\bibitem{wootters1982single}
W.~K. Wootters and W.~H. Zurek, ``A single quantum cannot be cloned,''
  \emph{Nature}, vol. 299, no. 5886, pp. 802--803, 1982.

\bibitem{joshi2011nobroadcasting}
P.~Joshi, A.~Grudka, K.~Horodecki, M.~Horodecki, P.~Horodecki, and
  R.~Horodecki, ``No-broadcasting of non-signalling boxes via operations which
  transform local boxes into local ones,'' \emph{Quantum Information {\&}
  Computation}, vol.~13, no. 7--8, pp. 0567--0582, 2013.

\bibitem{barrettetal2005}
J.~Barrett, N.~Linden, S.~Massar, S.~Pironio, S.~Popescu, and D.~Roberts,
  ``Nonlocal correlations as an information-theoretic resource,''
  \emph{Physical Review A}, vol.~71, p. 022101, Feb 2005.

\bibitem{jonesmasanes2005interconversions}
N.~S. Jones and L.~Masanes, ``Interconversion of nonlocal correlations,''
  \emph{Physical Review A}, vol.~72, p. 052312, Nov 2005.

\bibitem{dupuis2007nouniversalbox}
F.~Dupuis, N.~Gisin, A.~Hasidim, A.~A. M{\'e}thot, and H.~Pilpel, ``No nonlocal
  box is universal,'' \emph{Journal of Mathematical Physics}, vol.~48, no.~8,
  p. 082107, 2007.

\bibitem{forsterwolf2011bipartite}
M.~Forster and S.~Wolf, ``Bipartite units of nonlocality,'' \emph{Physical
  Review A}, vol.~84, p. 042112, 2011.

\bibitem{spekkens2005contextuality}
R.~W. Spekkens, ``Contextuality for preparations, transformations, and unsharp
  measurements,'' \emph{Physical Review A}, vol.~71, no.~5, p. 052108, 2005.

\bibitem{csw2014graphtheoretic}
A.~Cabello, S.~Severini, and A.~Winter, ``Graph-theoretic approach to quantum
  correlations,'' \emph{Physical Review Letters}, vol. 112, p. 040401, 2014.

\bibitem{sander2015effect}
S.~Staton and S.~Uijlen, ``Effect algebras, presheaves, non-locality and
  contextuality,'' in \emph{Proceedings of 42nd International Colloquium on
  Automata, Languages, and Programming (ICALP 2015)}, ser. Lecture Notes in
  Computer Science, M.~M. Halld{\'o}rsson, K.~Iwama, N.~Kobayashi, and
  B.~Speckmann, Eds.\hskip 1em plus 0.5em minus 0.4em\relax Springer, 2015, pp.
  401--413.

\end{thebibliography}
\end{document}